\PassOptionsToPackage{unicode}{hyperref}
\PassOptionsToPackage{hyphens}{url}
\PassOptionsToPackage{dvipsnames,svgnames,x11names}{xcolor}
\documentclass[
  12pt]{article}

\usepackage{iftex}
\usepackage{graphicx} 
\usepackage{amsmath,amsfonts,amssymb}
\usepackage{natbib}
\usepackage{xcolor} 
\usepackage{bm}
\usepackage[linesnumbered,ruled,vlined]{algorithm2e}
\usepackage{float} 
\usepackage{longtable}
\usepackage{booktabs}

\newtheorem{theorem}{Theorem}[section]
\newtheorem{corollary}{Corollary}[section]
 \newtheorem{lemma}{Lemma}[section]

\newcommand{\mbE}{\mathbb{E}}
\newcommand{\mbR}{\mathbb{R}}

\newcommand{\bX}{\bm{X}}

\newcommand{\bx}{\bm{x}}

\newcommand{\btheta}{\bm{\theta}}
\newcommand{\balpha}{\bm{\alpha}}

\newcommand{\bTheta}{\bm{\Theta}}
\newcommand{\bvartheta}{\bm{\vartheta}}

\newcommand{\mcG}{\mathcal{G}}
\newcommand{\mcR}{\mathcal{R}}
\newcommand{\mcH}{\mathcal{H}}
\newcommand{\mcL}{\mathcal{L}}
\newcommand{\mcN}{\mathcal{N}}
\newcommand{\mcB}{\mathcal{B}}
\newcommand{\mcX}{\mathcal{X}}
\newcommand{\mcS}{\mathcal{S}}
\newcommand{\mcD}{\mathcal{D}}
\newcommand{\mcW}{\mathcal{W}}

\ifPDFTeX
  \usepackage[T1]{fontenc}
  \usepackage[utf8]{inputenc}
  \usepackage{textcomp} 
\else 
  \usepackage{unicode-math}
  \defaultfontfeatures{Scale=MatchLowercase}
  \defaultfontfeatures[\rmfamily]{Ligatures=TeX,Scale=1}
\fi
\usepackage{lmodern}
\ifPDFTeX\else  
\fi
\IfFileExists{upquote.sty}{\usepackage{upquote}}{}
\IfFileExists{microtype.sty}{
  \usepackage[]{microtype}
  \UseMicrotypeSet[protrusion]{basicmath} 
}{}
\makeatletter
\@ifundefined{KOMAClassName}{
  \IfFileExists{parskip.sty}{%
    \usepackage{parskip}
  }{
    \setlength{\parindent}{0pt}
    \setlength{\parskip}{6pt plus 2pt minus 1pt}}
}{
  \KOMAoptions{parskip=half}}
\makeatother
\usepackage{xcolor}
\makeatletter
\ifx\paragraph\undefined\else
  \let\oldparagraph\paragraph
  \renewcommand{\paragraph}{
    \@ifstar
      \xxxParagraphStar
      \xxxParagraphNoStar
  }
  \newcommand{\xxxParagraphStar}[1]{\oldparagraph*{#1}\mbox{}}
  \newcommand{\xxxParagraphNoStar}[1]{\oldparagraph{#1}\mbox{}}
\fi
\ifx\subparagraph\undefined\else
  \let\oldsubparagraph\subparagraph
  \renewcommand{\subparagraph}{
    \@ifstar
      \xxxSubParagraphStar
      \xxxSubParagraphNoStar
  }
  \newcommand{\xxxSubParagraphStar}[1]{\oldsubparagraph*{#1}\mbox{}}
  \newcommand{\xxxSubParagraphNoStar}[1]{\oldsubparagraph{#1}\mbox{}}
\fi
\makeatother

\usepackage{longtable,booktabs,array}
\usepackage{calc} 
\usepackage{etoolbox}
\makeatletter
\patchcmd\longtable{\par}{\if@noskipsec\mbox{}\fi\par}{}{}
\makeatother
\IfFileExists{footnotehyper.sty}{\usepackage{footnotehyper}}{\usepackage{footnote}}
\makesavenoteenv{longtable}
\usepackage{graphicx}
\makeatletter
\def\maxwidth{\ifdim\Gin@nat@width>\linewidth\linewidth\else\Gin@nat@width\fi}
\def\maxheight{\ifdim\Gin@nat@height>\textheight\textheight\else\Gin@nat@height\fi}
\makeatother
\setkeys{Gin}{width=\maxwidth,height=\maxheight,keepaspectratio}
\makeatletter
\def\fps@figure{htbp}
\makeatother

\makeatletter
\@ifpackageloaded{caption}{}{\usepackage{caption}}
\AtBeginDocument{%
\ifdefined\contentsname
  \renewcommand*\contentsname{Table of contents}
\else
  \newcommand\contentsname{Table of contents}
\fi
\ifdefined\listfigurename
  \renewcommand*\listfigurename{List of Figures}
\else
  \newcommand\listfigurename{List of Figures}
\fi
\ifdefined\listtablename
  \renewcommand*\listtablename{List of Tables}
\else
  \newcommand\listtablename{List of Tables}
\fi
\ifdefined\figurename
  \renewcommand*\figurename{Figure}
\else
  \newcommand\figurename{Figure}
\fi
\ifdefined\tablename
  \renewcommand*\tablename{Table}
\else
  \newcommand\tablename{Table}
\fi
}
\@ifpackageloaded{float}{}{\usepackage{float}}
\floatstyle{ruled}
\@ifundefined{c@chapter}{\newfloat{codelisting}{h}{lop}}{\newfloat{codelisting}{h}{lop}[chapter]}
\floatname{codelisting}{Listing}

\makeatother
\makeatletter
\@ifpackageloaded{caption}{}{\usepackage{caption}}
\@ifpackageloaded{subcaption}{}{\usepackage{subcaption}}
\makeatother

\ifLuaTeX
  \usepackage{selnolig}  
\fi
\usepackage[]{natbib}
\usepackage{bookmark}

\IfFileExists{xurl.sty}{\usepackage{xurl}}{} 
\hypersetup{
  pdftitle={Title},
  pdfauthor={Author 1; Author 2},
  pdfkeywords={3 to 6 keywords, that do not appear in the title},
  colorlinks=true,
  linkcolor={blue},
  filecolor={Maroon},
  citecolor={Blue},
  urlcolor={Blue},
  pdfcreator={LaTeX via pandoc}}

\newcommand{\anon}{1}

\usepackage{hyperref}
\usepackage{xr}
\usepackage{xr-hyper} 
\begin{document}

\def\spacingset#1{\renewcommand{\baselinestretch}%
{#1}\small\normalsize} \spacingset{1}


\if1\anon
{
  \title{\bf Neural Network Machine Regression (NNMR): A Deep Learning Framework for Uncovering High-order Synergistic Effects}
  \author{Jiuchen Zhang\\
    Department of Biostatistics, University of Michigan\\
    and \\
    Ling Zhou \\
    Center of Statistical Research and School of Statistics, \\
    Southwestern University of Finance and Economics \\
    and \\
    Peter Song \\
    Department of Biostatistics, University of Michigan}
  \maketitle
} \fi

\if0\anon
{
  \bigskip
  \bigskip
  \bigskip
  \begin{center}
    {\LARGE\bf Neural Network Machine Regression (NNMR): A Deep Learning Framework for Uncovering High-order Synergistic Effects}
\end{center}
  \medskip
} \fi

\bigskip
\begin{abstract}
We propose a new neural network framework, termed Neural Network Machine Regression (NNMR), which integrates trainable input gating and adaptive depth regularization to jointly perform feature selection and function estimation in an end-to-end manner. By penalizing both gating parameters and redundant layers, NNMR yields sparse and interpretable architectures while capturing complex nonlinear relationships driven by high-order synergistic effects. We further develop a post-selection inference procedure based on split-sample, permutation-based hypothesis testing, enabling valid inference without restrictive parametric assumptions. Compared with existing methods, including Bayesian kernel machine regression and widely used post hoc attribution techniques, NNMR scales efficiently to high-dimensional feature spaces while rigorously controlling type I error. Simulation studies demonstrate its superior selection accuracy and inference reliability. Finally, an empirical application reveals sparse, biologically meaningful food group predictors associated with somatic growth among adolescents living in Mexico City.
\end{abstract}

\noindent%
{\it Keywords:} post-selection inference, depth regularization, input gating
\vfill

\newpage
\spacingset{1.8} 

\section{Introduction}

In many scientific fields, such as biomedical research, genomics, epidemiology, and environmental science, researchers work with high-dimensional datasets where the number of potential explanatory variables far exceeds the number of observations. Identifying a relevant subset of variables is crucial for improving model interpretability, reducing overfitting, and enhancing predictive performance. 

Variable selection has a long history in linear modeling, where sparsity penalties such as the LASSO \citep{tibshirani1996regression} and smoothly clipped absolute deviation (SCAD; \citealp{fan2001variable}) are now routine. However, many scientific questions involve nonlinear, possibly high‑order interactions that lie well beyond the scope of linear assumptions.  To accommodate such complexity, research has shifted toward nonparametric frameworks that let the data reveal flexible functional forms while still isolating the influential predictors.  Early efforts extended the linear model to additive structures in which each covariate enters through an unspecified univariate function \citep{hastie2017generalized}.  Although additive models inherit interpretability and can be equipped with component‑wise selection rules, they may miss interaction effects unless higher‑order additive terms or interaction kernels are incorporated, a step that greatly complicates both estimation and feature selection.

Complementary strategies tackle the “curse of dimensionality’’ by screening rather than penalizing.  Sure independence screening (SIS; \citealp{fan2008sure}) and its variants rank predictors through marginal utilities to select the feature space before more elaborate modeling; however, their reliance on marginal signals can overlook variables that act solely through interactions.  Fully Bayesian machinery, typified by Bayesian kernel machine regression (BKMR; \citealp{bobb2015bayesian}), integrates variable selection, nonlinear response surfaces, and uncertainty quantification in a single coherent framework, but at the expense of Markov chain Monte Carlo computation that scales poorly with thousands of samples or covariates. Collectively, these developments underscore a persistent tension in nonparametric regression: balancing modeling flexibility, statistical efficiency, computational feasibility, and interpretability.  

Deep neural networks pose a dual challenge for feature selection: the parameterization is massively over‑complete, yet the correspondence between weights and inputs is highly entangled.  Post hoc attribution tools such as SHAP \citep{lundberg2017unified}, Integrated Gradients \citep{sundararajan2017axiomatic}, and deepLIFT \citep{shrikumar2017learning} estimate ex post importance scores by propagating gradients or relevance values back to the input layer.  These methods require only a trained model and are widely used in practice, but they do not impose sparsity during training; their scores can be unstable under collinearity and provide no formal guarantee that low‑scoring variables are irrelevant—limitations that become acute in high‑dimensional biomedical applications.

To induce sparsity within neural networks, two embedded approaches have been proposed. Deep feature selection inserts a sparse linear layer at the network input and applies an $\ell_{1}$ penalty to its weights, thereby jointly learning both a representation and feature importance in an end-to-end fashion \citep{li2016deep}. Building on this framework, nonlinear variable selection methods introduce a continuous $\ell_{0}$ relaxation to selection layer, yielding rigorous convergence guarantees and selection consistency under a generalized stable Hessian condition \citep{chen2021nonlinear}.  Group Lasso regularization instead treats all outgoing connections of each input variable as a group, which prunes entire neurons or inputs to produce highly compact networks without extensive manual grouping \citep{scardapane2017group}. However, these methods still lack automated control over both network depth and width, offer limited support for valid post-selection inference.

In this paper, we propose a neural network-based variable selection method tailored to effectively identify relevant features in high-dimensional settings. Our approach incorporates several innovations that enhance interpretability, scalability, and statistical rigor. First, we introduce a trainable feature-gating mechanism at the neural network's input layer, where each feature is weighted by a learnable parameter. Coupled with an $L_1$-penalty analogous to the LASSO \citep{tibshirani1996regression}, this mechanism directly induces sparsity, explicitly identifying a concise subset of predictors most strongly associated with the outcome.

Second, our method includes an adaptive thresholding procedure during training, periodically removing features whose gating parameters remain persistently small. Additionally, we dynamically prune redundant hidden layers by replacing them with identity mappings if their contributions become negligible. These two procedures promote structured sparsity across both input features and network depth, resulting in a compact and interpretable model.

Third, we establish theoretical guarantees for our approach by deriving an risk upper bound, achieving minimax optimality under mild conditions. Combined with our proposed data-splitting and permutation-based inference framework, our method rigorously controls type-I errors in high-dimensional analyses, effectively reducing false discoveries.

Our neural network-based variable selection approach exhibits several advantages over existing methods, particularly in contexts involving large-scale biomedical and genetic data. Unlike Bayesian kernel machine regression (BKMR) \citep{bobb2015bayesian}, which involves computationally expensive kernel inversions, or traditional screening approaches \citep{fan2008sure}, our framework scales efficiently to hundreds or even thousands of predictors. Moreover, in contrast to popular post hoc interpretation methods such as SHapley Additive exPlanations (SHAP) or Integrated Gradients—often unstable or misleading in high-dimensional settings—our method directly incorporates interpretability within the model training process through structured penalties.

 Overall, our unified framework simultaneously addresses feature selection, adaptive neural network optimization, and rigorous statistical inference, positioning our method as a powerful advancement for interpretable, scalable, and statistically rigorous feature selection in contemporary statistical modeling and deep learning applications.

\section{Methodology}

We consider variable selection within the framework of neural networks. Suppose we observe independent and identically distributed samples $ \{(\boldsymbol{X}_i,Y_i)\}_{i=1}^n $, where $ Y_i \in \mathbb{R} $ is a scalar response and $ \boldsymbol{X}_i \in \mathbb{R}^d $ is a high-dimensional vector of predictors. Our objective is to simultaneously estimate the unknown relationship $ g(\cdot) $ between predictors and response, and identify a sparse subset of relevant features. To accomplish this, we embed the variable selection directly into the neural network training process. Specifically, we estimate $ g $ by minimizing the empirical squared-error loss:
\begin{equation}
\hat{g} = \arg\min_{g\in\mathcal{G}} \frac{1}{n}\sum_{i=1}^{n}(Y_i - g(\boldsymbol{X}_i))^2,
\end{equation}
where $\mathcal{G}$ denotes the class of neural networks designed explicitly to perform variable selection.

\subsection{Neural Network}
We set $ \mathcal{G} $ to be a function class consisting of multi-layers neural networks with a ReLU activation function to approximate the conditional expectation:
\[
\mathcal{G} := \mathcal{NN}(\mcW,  \mcD, \mathcal{S}, \mathcal{B}),
\]
where the input data is the predictor $ \boldsymbol{X}_i $, forming the first layer, and the output is the last layer of the network. This network $\mathcal{G} $ has width $ \mcD $, which includes $ \mcD $ hidden layers and $ \mcD+2 $ total layers. Let $ N_l $ denote the width of layer $ l $ for $ l = 0, \dots, \mcD, \mcD+1 $, which is the number of nodes in each layer. Specifically, $ N_0 = d $ represents the input dimension of $ X $, and $ N_{\mcD+1} = 1 $ represents the response $ Y $. The width $ \mcW$ is the maximum width among the hidden layers:
\[
\mcW = \max(N_1, \dots, N_{\mcD}).
\]
Without loss of generality, we consider the same width for all hidden layers in this paper. The size $ \mcS $ is the total number of parameters in the network $ \mathcal{G} $, given by $ \mcS = \sum_{l=0}^{\mcD} N_{l+1} \times (N_l + 1) $. 

In particular, a ReLU neural network with $\mcD$ hidden layers is a collection of mappings $g : \mathbb{R}^{N_0} \to \mathbb{R}^{N_{\mcD+1}}$ of the form $g( \boldsymbol{x}) = h_{\mcD} \circ \sigma \circ h_{\mcD-1} \circ \cdots \circ \sigma \circ h_0(\boldsymbol{x})$,
where $h_1 \circ h_2(\boldsymbol{x}) := h_1(h_2(\boldsymbol{x}))$ represents the composition of two functions $h_1$ and $h_2$; $\sigma(\boldsymbol{x}):= \max(\boldsymbol{x}, 0)$ is the ReLU function, which is applied elementwise; $h_l(\boldsymbol{x}) := \boldsymbol{W}_l\boldsymbol{x} + c_l$ is an affine transformation with $\boldsymbol{W}_l \in \mathbb{R}^{N_{l+1} \times N_l}$ and $c_l \in \mathbb{R}^{N_l}$. We assume that every function $ g \in \mathcal{G} $ satisfies $\lVert g \rVert_{\infty} \le \mathcal{B}$ for some $0 < \mathcal{B} < \infty$, where $\lVert g \rVert_{\infty}$ is the supnorm of the function $g$.

\begin{figure}[!h]
    \centering
    \includegraphics[]{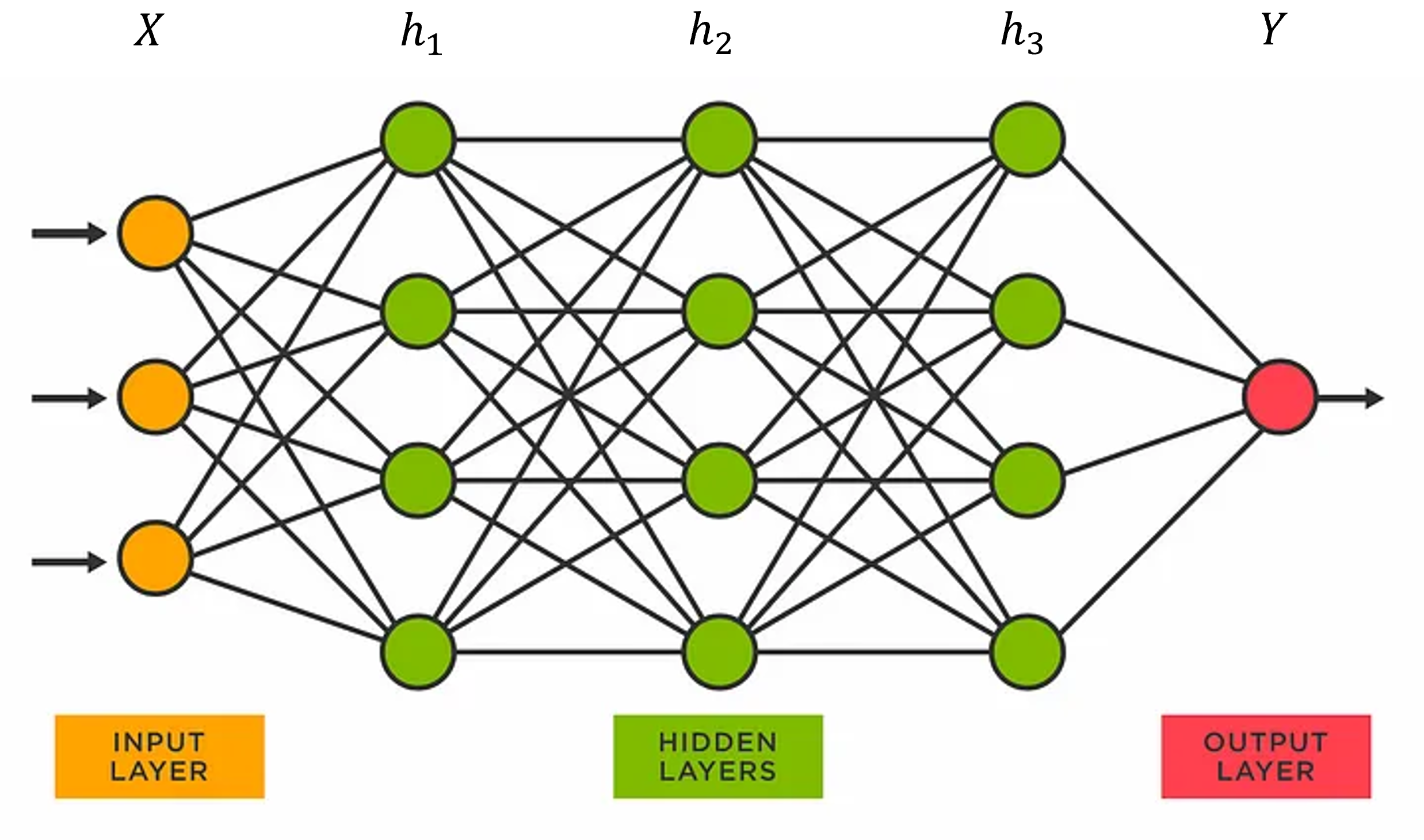}
    \caption{Simple Multi-layer Neural Network}
\end{figure}

\subsection{Variable Selection}
\label{sec:vs}

In high-dimensional nonparametric regression problems, the response variable $ Y $ often depends on only a small subset of the available features, while the rest contribute little to predictive accuracy. This suggests that the data lies on a low-dimensional manifold embedded in the high-dimensional feature space. By leveraging this assumption, we aim to identify and retain only the most relevant features, reducing computational complexity while maintaining predictive performance.

To exploit this
sparsity we attach to the input layer a trainable gating vector
$\boldsymbol{\alpha}=(\alpha_{1},\dots,\alpha_{d})^{\top}\in \mathbf{R}^{d}$ and
feed the re‑weighted input
\[
\widetilde{\boldsymbol{X}}=\boldsymbol{\alpha}\odot\boldsymbol{X},\qquad
(\boldsymbol{\alpha}\odot\boldsymbol{X})_{j}=\alpha_{j}X_{j},
\]
into the network. 

This transformation allows the model to learn which features are important by setting some entries of $ \alpha $ to zero, effectively removing those variables from consideration. The resulting feature selection mechanism is embedded within the neural network architecture, ensuring that only a subset of the input dimensions contributes to the final model.

To formalize the selection process, we incorporate $L_1$ regularization on $ \alpha $, which encourages sparsity by penalizing small coefficient values. If $ \alpha $ has exactly $ d_0 $ nonzero elements, then $\widetilde{\boldsymbol{X}}$ can be viewed as a $ d_0 $-dimensional vector, allowing the model to learn a function $ g: \mathbb{R}^{d_0} \to \mathbb{R} $. This leads to the following optimization problem:
\[
\mathcal{L}_{n}(g)
=\frac{1}{n}\sum_{i=1}^{n}
\bigl\{Y_{i}-g(\boldsymbol{X}_{i})\bigr\}^{2}
+\lambda_{1}\lVert\boldsymbol{\alpha}\rVert_{1},
\]
Here, $ g \in \mathcal{G}(\balpha,\btheta)$ is a neural network that maps the selected features to $ Y $, where $\mathcal{G}(\balpha,\btheta) = \{g: g=\phi(\boldsymbol{\alpha}\odot\boldsymbol{X}), \phi \in \mathcal{G}\}$, $\btheta = (\boldsymbol{W}_l,c_l, l= 0,\cdots,\mcD)$, and $ \| \cdot \|_1 $ serves as an $L_1$ norm that ensures sparsity in $ \balpha $. This formulation not only selects the most informative features but also learns the nonlinear mapping between those features and the response variable. Unlike traditional feature selection methods that operate in a separate preprocessing step, our approach integrates feature selection into the model training process, allowing it to dynamically adapt the selected features based on the data.

Compared to existing feature selection techniques such as Bayesian Kernel Machine Regression (BKMR) and SHAP-based methods, our approach offers significant advantages in terms of scalability and efficiency. BKMR, while powerful in modeling nonlinear interactions, suffers from high computational costs due to kernel matrix inversion, making it impractical for large-scale problems. Similarly, SHAP-based methods, which estimate feature importance post hoc, do not enforce sparsity and can be computationally expensive for high-dimensional data.

In contrast, our method is fully compatible with GPU-accelerated deep learning frameworks, allowing it to scale efficiently. The integrated feature selection mechanism eliminates the need for separate preprocessing steps, reducing overhead. Moreover, by explicitly learning a sparse representation, our approach ensures that the model remains interpretable while retaining predictive power.

The input‑gating layer reduces the width of the effective feature space, but the depth of the network remains fixed.  Retraining a new architecture each time the size of the active set changes is computationally infeasible.  Following \cite{tan2024generative}, a hidden layer $l$ is redundant when its linear map $h_{l}(x)=\boldsymbol{W}_lx$ simply forwards
its input, i.e.\ $\boldsymbol{W}_l=\boldsymbol{I}$ and $c_l = 0$, where $\boldsymbol{I}$ is the identity matrix.  Note that the bias term has been absorbed into the
augmented input.

This motivates the depth penalty,
\[
\operatorname{DP}(\theta)=\sum_{l=1}^{\mcD}(\lVert \boldsymbol{W}_l-\boldsymbol{I}\rVert_{1}+|c_l|),
\]
which is zero precisely when every dispensable layer collapses to the identity.  Combining input gating and depth regularisation, the final training problem becomes
\[
(\widehat{\balpha}, \widehat{\btheta})
=\arg\min_{g\in \mcG(\balpha, \btheta)}
\Bigl\{
\frac{1}{n}\sum_{i=1}^{n}
\bigl\{Y_{i}-g(\boldsymbol{X}_{i})\bigr\}^{2}
+\lambda_{1}\lVert\boldsymbol{\alpha}\rVert_{1}
+\lambda_{2}\sum_{l=1}^{\mcD}(\lVert \boldsymbol{W}_l-\boldsymbol{I}\rVert_{1}+|c_l|)
\Bigr\},
\tag{12}\label{eq:fullobj}
\]
with $\lambda_{1}$ controlling sparsity and $\lambda_{2}$ regulating depth.  During optimization each $\boldsymbol{W}_l$ is softly shrunk toward the identity matrix whose penalty terms converge to zero can be pruned after training, yielding a compact architecture that matches the complexity of the selected feature subset while retaining the expressive power of deep ReLU networks.

\subsection{Inference Procedure after Variable Selection}
\label{sec:inf}

Following variable selection, rigorous statistical inference is required to assess the significance of the retained predictors. We employ a data‐splitting strategy \citep{cox1975note,hurvich1990impact} combined with the model‐free test of conditional independence \citep{cai2022model}. Specifically, we partition the sample into two parts—one for both variable selection and estimation of conditional mean functions, and the other solely for inference—thereby preserving maximum power.

Specifically, we divide the original dataset into two subsets: one for variable selection and estimation (denoted $ D_1 $), and another purely for inference (denoted $ D_2 $). The first subset $D_1$ is used both to identify important predictors through the penalized neural network described in Section \ref{sec:vs} and to estimate conditional expectations for inference purposes. The inference step directly utilizes the entire second subset $D_2$ without further splitting, enhancing statistical power.

Let the selected variables identified from $D_1$ be represented as $X_S$. The inference procedure involves testing the null hypothesis:
\[
H_0: E(Y \mid \boldsymbol{X}_{-j,S}) = E(Y \mid \boldsymbol{X}_S) \quad \text{versus} \quad H_1: E(Y \mid \boldsymbol{X}_{-j,S}) \neq E(Y \mid \boldsymbol{X}_S),
\]
where $\boldsymbol{X}_{j}$ denotes potential variable of interest and $\boldsymbol{X}_{-j,S}$ denotes the rest of variables in $\boldsymbol{X}_S$ excluding $\boldsymbol{X}_j$.

Following the methodology of \citet{cai2022model} with our enhanced approach, we fit two nonparametric models using data from $D_1$: one model incorporating both $X_S$ and $Z$, denoted by $\hat{g}_1(\boldsymbol{X}_S$, and a null model that includes only $\boldsymbol{X}_{-j, S}$, denoted by $\hat{g}_0(\boldsymbol{X}_{-j, S})$. To assess the predictive performance of these models, we perform two-sample comparisons on the residuals from these models using the inference dataset $D_2$. Specifically, we utilize the two-sample t-test (TS), defined as the average difference in squared residuals between the two models. The test statistic $T_{TS}$ is computed as:
\[
T_{TS} = \frac{1}{n_2}\sum_{i \in D_2}\left[\{Y_i - \hat{g}_1(\boldsymbol{X}_{S,i})\}^2 - \{Y_i - \hat{g}_0(\boldsymbol{X}_{-j,S,i})\}^2\right],
\]
and we reject the null hypothesis $H_0$ if $T_{TS}$ is significantly negative. This test evaluates the difference in mean squared prediction errors between the two models, under the assumption that second-order moments exist for the data.


\begin{algorithm}[H]
\label{alg:infer}
\caption{Inference procedure with permutation testing}
\KwIn{Training data $D_1$, inference data $D_2$, number of permutations $B$, selected variable $\boldsymbol{X}_S$ obtained from $D_1$} 
\KwOut{Test statistic and p-value}
Estimate $\hat{g}_1(\boldsymbol{X}_S)$ and $\hat{g}_0(\boldsymbol{X}_{-j,S})$ using $D_1$\;
Evaluate both models on $D_2$ and calculate residuals: \\
$\quad U_i = Y_i - \hat{g}_1(\boldsymbol{X}_{S,i})$, \\
$\quad V_i = Y_i - \hat{g}_0(\boldsymbol{X}_{-j,S,i})$, for all $i \in D_2$\;
Define the pooled set $S = \{U_i\} \cup \{V_i\}$\;
Compute observed test statistic $T$ using $\{U_i\}, \{V_i\}$\;
\For{$b = 1$ to $B$}{
    Randomly partition $S$ into two equal-sized sets $\{U^*_i\}, \{V^*_i\}$\;
    Compute test statistic $T_{TS,b}^*$ using $\{U^*_i\}, \{V^*_i\}$\;
}
Compute p-value: $\hat{p}_{TS} = \frac{1}{B} \sum_{b=1}^{B} I\{T > T_{TS,b}^*\}$\;
\end{algorithm}

The significance of this incremental predictive power is evaluated by permutation tests, as described in Algorithm \ref{alg:infer}. Estimation of $\hat{g}_1(\boldsymbol{X}_S)$ and $\hat{g}_0(\boldsymbol{X}_{-j,S})$ using $D_1$ increases statistical efficiency and power by fully utilizing the inference dataset, ensuring robust inference without restrictive parametric assumptions. Additionally, our inference framework is flexible, allowing for individual testing of each selected variable separately, or focusing solely on specific variables of particular interest based on the research objectives.

In addition, under suitable conditions, one can show that the asymptotic Gaussianity of the t statistic $T_{TS}$ has been established in \citet{lei2020cross}. As a result, the last step of the $p$ value calculation can be modified by first estimating the standard derivation of $\{T_{TS,b}^* , b = 1, . . . , B\}$, denoted as $\hat\sigma_{B}$, and then calculating the  $p$ value as $\Phi^{-1}(T/\hat\sigma_{B})$, where $\Phi$ is the cumulative distribution function of standard normal distribution. This will provide a $p$ value with relatively high resolution.

\section{Theory}
\label{sec:theory}
In this section, we present theoretical results concerning the consistency and validity of our neural network-based variable selection approach. We begin by formally defining the theoretical framework underlying our methodology.

Let $X_{[j]}$ be the $j$-th component of $\bX$. Then $X_{[j]}$ is defined as conditionally unimportant if and only if 
\begin{eqnarray}
\label{eq:imp}
\mbE(Y \mid \bX_{[-j]} = \bx_{[-j]}) = \mbE(Y\mid \bX = \bx),
\end{eqnarray}
where $\bX_{[-j]} = (X_{[1]}, \cdots, X_{[j-1]}, X_{[j+1]}, \cdots, X_{[d]})$, and 
$\bx_{-j}$ is defined similarly. 

We assume $\bX$ is supported on a bounded set, and for simplicity, we assume this bounded set to be $[0, 1]^d$. In the rest of the paper, the constant $c$ denotes a positive constant that may vary across different contexts. We separate $\bX$ into $\bX^s \in [0, 1]^{d_0}$ and $\bX^c \in [0, 1]^{d-d_0}$, denoting the conditional important and 
unimportant variables, respectively. For any $\bx \in [0, 1]^p$, 
$\mbE(Y \mid \bX^{s} = \bx^{s}) = \mbE(Y\mid \bX = \bx)$. 

Let $g^*$ be the minimizer of the population loss, that is, 
\[
g^* = \arg\min_{g \in \mcH^{\beta}([0, 1]^p,  \mcB)} \mbE\left( Y_i - g(\bX_i) \right)^2,
\]
where the minimizer is taken over the entire space and thus implies that $g^*$ does not necessarily belong to the FNN set $\mcG(\balpha, \btheta)$. Clearly, based on the definition~\eqref{eq:imp} $g^*(\bX)$ depends on $\bX^s$ only.  To simplify notation, we assume the first $d$ components of $\bX$ are important, and divide $\balpha$ into two parts $\balpha = (\balpha_s, \balpha_c)$ with  $\balpha_s \in \mbR^d$ and $\balpha_c \in \mbR^{p-d}$ corresponding to the coefficients of $\bX^s$ and $\bX^c$, respectively. 

We establish the large sample property of $\hat{g}$ in terms of its excess risk, which is defined as 
the difference between the risk of $g$ and $g^*$:
\[
\mcR(g) - \mcR(g^*) = \mbE\left( Y_i - g(\bX_i) \right)^2 - \mbE\left( Y_i - g^*(\bX_i) \right)^2.
\]

We construct an estimator of $g^*$ within $\mcG(\balpha, \btheta)$ with network parameters $\btheta$ and variable importance index parameters $\balpha$. Thus, referring to the optimal network solution 
\[
g^*_{\mcG} = \arg\min_{g \in \mcG(\balpha, \btheta)}\mbE\left( Y_i - g(\bX_i) \right)^2, 
\]
we define a set of $(\balpha, \btheta)$ as 
\[
\bTheta = \{(\balpha, \btheta): \mcR(g_{\balpha, \btheta}) = \mcR(g^*_{\mcG}), g_{\balpha, \btheta} \in \mcG(\balpha, \btheta)\}.
\]
With proper conditions on the data structure and network class, $g^*_{\mcG}$ can approach $g^*$ close enough such that 
$\balpha_c = 0$ and $\balpha_s \neq 0$. That is, for any $(\balpha, \btheta) \in \bTheta$, there exists a positive constant $c > 0$ such that 
$\min_{1 \leq j  \leq d}|\alpha_j| \geq c$. Further, there exists a solution $\balpha_s \neq 0$, $\balpha_c = \bm{0}$ such that $(\balpha_s, \bm{0}, \btheta) \in \bTheta$ still holds.  These properties of 
$g^*_{\mcG}$ and $\bTheta$ are given in Lemma~\ref{th:le1} in the Supplementary Materials. 

In practice, the network parameters $(\hat{\balpha}, \hat{\btheta})$ obtained in~\eqref{eq:fullobj} lies outside of $\bTheta$. To measure the distance between 
$(\balpha, \btheta)$ and $\bTheta$, we define metric 
$d((\balpha, \btheta), \bTheta) := \min_{\bvartheta \in \bTheta}\|\binom{\balpha}{\btheta} - \bvartheta\|^2_2$.
Clearly, $d((\hat{\balpha}, \hat{\btheta}), \bTheta) \to 0$ is sufficient for $\hat{\balpha}_s$ to be bounded away from zero. Thus, we can 
establish the selection results via establishing the convergence rate of the proposed penalized network $\hat{g}_{\mcG}$ in terms of its 
parameters. 

We define $\mcG|_{\bx} := \{g(\bx_1), g(\bx_2), \cdots, g(\bx_n): g \in \mcG(\balpha, \btheta)\}$ for a given sequence $\bx = (\bx_1, \cdots, \bx_n)$ and denote $\mcN_{2n} = \sup_{\bx}\mcN_{2n}(\delta, \|\cdot\|_{\infty}, \mcG|_{\bx})$ as the covering number of $\mcG|_{\bx}$ under the norm $\|\cdot\|_{\infty}$ with radius $\delta$. Let $A \preceq B$ represent $A \leq c B$ for a positive constant $c$, $A \wedge B=\min(A, B)$, and $\Phi_1 \circ \Phi_2:= \{
\phi_1 \circ \phi_2: \phi_1 \in \Phi_1, \phi_2 \in \Phi_2 \}$ represents the composition of two function classes $\Phi_1$ and $\Phi_2$.

The next conditions are needed to establish the theoretical results:

\begin{itemize}
\item[{(C1)}] The dimension of conditional important variables $d$ is fixed, and there exists a constant 
$\tau_d > 0$ such that 
\[
\min_{1 \leq j \leq d} \bigg| \mbE(Y \mid \bX_{[-j]} = \bx_{[-j]}) - \mbE(Y \mid \bX = \bx) \bigg| \geq c\tau_d,
\]
for some positive constant $c$. 
\item[{(C2)}] Assume $Y$ is a sub-Gaussian random variable. 
\item[{(C3)}] Function class for $g_{\balpha, \btheta}$ and $g^*$: for any function $g_{\balpha, \btheta} \in \mcG$ and the true function $g^*$, we assume 
$\|g_{\balpha, \btheta}\|_{\infty} < \mcB$ and $\|g^*\|_{\infty} < \mcB$.
\end{itemize}

\begin{theorem}
\label{th:cons}
Suppose the conditions (C1)-(C3) hold. If $\lambda^2_1 \preceq \frac{\log^2 n \log \mcN_{2n}}{nd}  + 
 \frac{\left( \mcR(g^*_{\mcG}) - \mcR(g^*) \right)}{d}$ and $\lambda_2^2  \preceq \frac{\log^2 n \,\log \mcN_{2n}}{n\mcW^2\mcD}
\;+\;\frac{\big(\mcR(g_{\check\balpha,\check\btheta})-\mcR(g^*)\big)}{\mcW^2\mcD}$, 
$(\hat{\balpha}, \hat{\btheta})$ defined in~\eqref{eq:fullobj} satisfies 
\begin{eqnarray*}
\mbE[d((\hat{\balpha}, \hat{\btheta}), \bTheta)] &\preceq&  \frac{\log^2 n \log \mcN_{2n}}{n} +  
\log^2 n \left( \mcR(g^*_{\mcG}) - \mcR(g^*) \right)\\
\mbE\|\hat\balpha_c\|^2_2 &\preceq& \frac{\log^2 n \log \mcN_{2n}}{n} +  \log^2 n \left( \mcR(g^*_{\mcG}) - \mcR(g^*) \right),
\end{eqnarray*}
where $\mbE$ is taken with respect to $(\bX_i, Y_i)_{i=1}^n$. 

\end{theorem}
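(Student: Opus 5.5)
The plan has three stages: bound the excess risk of the penalized estimator, convert risk control into parameter control, and then read off the bound on $\hat\balpha_c$ from the structure of $\bTheta$.

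\textbf{Step 1: basic inequality.} Fix a comparison point $(\check\balpha,\check\btheta)\in\bTheta$ with $\check\balpha_c=\bm 0$; such a point exists by Lemma~\ref{th:le1}. By optimality of $(\hat\balpha,\hat\btheta)$ in \eqref{eq:fullobj},
\[
\mcR_n(\hat g)+\lambda_1\|\hat\balpha\|_1+\lambda_2\operatorname{DP}(\hat\btheta)\le \mcR_n(g_{\check\balpha,\check\btheta})+\lambda_1\|\check\balpha\|_1+\lambda_2\operatorname{DP}(\check\btheta),
\]
where $\mcR_n$ is the empirical risk. Subtract $\mcR(g^*)$ from both sides and add and subtract the population risks. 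This bounds $\mcR(\hat g)-\mcR(g^*)$ by three pieces: the empirical-process term $\sup_{g}\{(\mcR-\mcR_n)(g)-(\mcR-\mcR_n)(g^*)\}$, the approximation term $\mcR(g^*_{\mcG})-\mcR(g^*)$, and the penalty differences. The penalty differences are at most $\lambda_1\|\check\balpha\|_1\preceq\lambda_1 d$ and $\lambda_2\operatorname{DP}(\check\btheta)\preceq\lambda_2\mcW^2\mcD$, using boundedness of the parameters.

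\textbf{Step 2: excess risk bound.} Control the empirical-process term with the standard truncation and covering argument used for ReLU networks. Truncate $Y$ at level $\beta_n\asymp\log n$, which is allowed by the sub-Gaussian assumption (C2). Then apply a Bernstein/covering bound over $\mcG|_{\bx}$ with $\mcN_{2n}$ at radius $\delta\asymp 1/n$, using the sup-norm bound from (C3). This gives
\[
\mbE[\mcR(\hat g)-\mcR(g^*)]\preceq \frac{\log^2 n\,\log\mcN_{2n}}{n}+\bigl(\mcR(g^*_{\mcG})-\mcR(g^*)\bigr)+\lambda_1 d+\lambda_2\mcW^2\mcD .
\]
The stated conditions on $\lambda_1^2$ and $\lambda_2^2$ are calibrated so that the penalty contributions, after squaring and rescaling by $d$ and $\mcW^2\mcD$, are of the same order as the first two terms. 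The cleanest route is to apply Cauchy–Schwarz as $\lambda_1\|\check\balpha\|_1\le\lambda_1\sqrt d\,\|\check\balpha\|_2$ and combine it with the quadratic form from Step 3. This absorbs the penalty into the left-hand side rather than paying it linearly.

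\textbf{Step 3: from risk to parameter distance (main obstacle).} I need a local identifiability, or quadratic-growth, condition:
\[
\mcR(g_{\balpha,\btheta})-\mcR(g^*_{\mcG})\succeq c\, d((\balpha,\btheta),\bTheta)/\log^2 n
\]
on the relevant region. This is where the extra $\log^2 n$ factor in the conclusion comes from. I would obtain it in two parts. First, use condition (C1): any perturbation that shrinks $\balpha_s$ toward zero or activates $\balpha_c$ changes the conditional mean by at least $\tau_d$, which gives separation away from $\bTheta$. Second, near $\bTheta$, use a Łojasiewicz-type bound, exploiting that the network map is piecewise polynomial in $(\balpha,\btheta)$ and that parameters lie in a box of radius $O(\log n)$. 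This step is the least routine part, and I expect it to be the real difficulty. Failing a clean derivation, I would state it as the content of Lemma~\ref{th:le1} together with (C1). Combining Step 3 with Step 2 and using $\mcR(g_{\hat\balpha,\hat\btheta})-\mcR(g^*_{\mcG})\le \mcR(\hat g)-\mcR(g^*)$ yields the first display of Theorem~\ref{th:cons}.

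\textbf{Step 4: the bound on $\hat\balpha_c$.} Every element of $\bTheta$ has $\balpha_c=\bm 0$, by Lemma~\ref{th:le1} and (C1). Hence
\[
\|\hat\balpha_c\|_2^2\le \min_{\bvartheta\in\bTheta}\Bigl\|\tbinom{\hat\balpha}{\hat\btheta}-\bvartheta\Bigr\|_2^2=d((\hat\balpha,\hat\btheta),\bTheta).
\]
Taking expectations gives the second display directly from the first.
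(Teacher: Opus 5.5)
Your proposal has two genuine gaps, in Step 4 and in the penalty handling of Steps 1--2.

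\textbf{Step 4.} This step rests on a false claim, and it is the step that carries the second display. Lemma~\ref{th:le1}(ii) only says that \emph{some} element of $\bTheta$ has $\balpha_c=\bm{0}$. It is not true that all elements do, and no condition on the data can make it true, because the parametrization is redundant. Take any $(\balpha,\btheta)\in\bTheta$ with $\balpha_c=\bm{0}$. For an unimportant coordinate $k$, set $\alpha_k=1$ and zero the $k$-th column of $\boldsymbol W_0$. The network function, and hence the risk, is unchanged, so the new point lies in $\bTheta$ but has $\balpha_c\neq\bm{0}$. Hence $\|\hat\balpha_c\|_2^2\le d((\hat\balpha,\hat\btheta),\bTheta)$ does not follow. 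Even $d((\hat\balpha,\hat\btheta),\bTheta)=0$ would not force $\hat\balpha_c=\bm{0}$. What controls $\hat\balpha_c$ is the $\ell_1$ penalty, not the geometry of $\bTheta$. The paper plugs the specific point $(\check\balpha_s,\bm{0},\check\btheta)\in\bTheta$ into the basic inequality and splits $\|\hat\balpha\|_1=\|\hat\balpha_s\|_1+\|\hat\balpha_c\|_1$. It then isolates $\lambda_1\|\hat\balpha_c\|_1$ on the left and bounds the rest by the generalization gap plus penalty differences, controlled as for the first display.

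\textbf{Steps 1--2.} Your handling of the penalty does not give the stated rate. Write $r_n=\log^2 n\,\log\mcN_{2n}/n+(\mcR(g^*_{\mcG})-\mcR(g^*))$. You discard $-\lambda_1\|\hat\balpha\|_1-\lambda_2\operatorname{DP}(\hat\btheta)$ and pay $\lambda_1 d+\lambda_2\mcW^2\mcD$ linearly. With $\lambda_1,\lambda_2$ at the permitted order, this produces terms of order $\sqrt{d\,r_n}$ and $\mcW\sqrt{\mcD\,r_n}$, which dominate $r_n$. Your proposed repair, $\lambda_1\|\check\balpha\|_1\le\lambda_1\sqrt d\,\|\check\balpha\|_2$, does not help. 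The quantity $\|\check\balpha\|_2$ is the size of a point of $\bTheta$, not the distance on the left-hand side, so Young's inequality has nothing to absorb it into. The missing step is to keep the \emph{differences} and apply the reverse triangle inequality:
\begin{align*}
\|\check\balpha\|_1-\|\hat\balpha\|_1&\le\|\check\balpha-\hat\balpha\|_1\le\sqrt d\,\|\check\balpha-\hat\balpha\|_2,\\
\|\check{\boldsymbol W}_l-\boldsymbol I\|_1-\|\hat{\boldsymbol W}_l-\boldsymbol I\|_1&\le\|\check{\boldsymbol W}_l-\hat{\boldsymbol W}_l\|_1\le\mcW\,\|\check{\boldsymbol W}_l-\hat{\boldsymbol W}_l\|_F .
\end{align*}
Here $(\check\balpha,\check\btheta)$ must be the point of $\bTheta$ nearest to $(\hat\balpha,\hat\btheta)$. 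Young's inequality then gives $\tfrac12 d((\hat\balpha,\hat\btheta),\bTheta)$, which is absorbed on the left, plus $c(\lambda_1^2 d+\lambda_2^2\mcW^2\mcD)\preceq r_n$. That second term is exactly what the hypotheses on $\lambda_1^2,\lambda_2^2$ are calibrated for. This forces the comparison point for the first display to be the projection, which need not have $\balpha_c=\bm{0}$. So you cannot ``fix'' a point with $\check\balpha_c=\bm{0}$ there; the two displays need different comparison points.

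\textbf{Step 3.} The paper does not derive quadratic growth from (C1)--(C3). It simply invokes $\mbE[d((\balpha,\btheta),\bTheta)]\le c\,\mbE\{\mcR(g_{\balpha,\btheta})-\mcR(g_{\check\balpha,\check\btheta})\}$ as a condition, so your fallback of assuming it matches the paper. Your sketched derivation would not work, however.
\begin{itemize}
\item Multiplying $\alpha_j$ by $t$ and dividing the $j$-th column of $\boldsymbol W_0$ by $t$ leaves $g$ unchanged. So shrinking $\balpha_s$ need not move the conditional mean at all, and (C1) gives no separation.
\item A {\L}ojasiewicz bound gives only an unknown power of the distance, not the linear growth you need.
\item The $\log^2 n$ factor comes from truncating $Y$ at $\beta_n=\log n$ in Step 2, not from Step 3. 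Charging it again in Step 3 would give $\log^4 n\,\log\mcN_{2n}/n$, not the stated first term.
\end{itemize}
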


Now, we further explore how the error relies on the FNN structure and  the function class to which $g^*$ belongs.
We consider the H{\"o}lder class, which is broad enough to cover most applications. In particular, denote $\left \lceil a \right \rceil$ and $\left \lfloor a \right \rfloor$ to be  the smallest integer no less than $a$ and the largest integer strictly smaller than $a$, respectively. Let $\mathbb{N}^+$ be the set of positive integers and $\mathbb{N}_0$ be the set of nonnegative integers.
 Let $\beta = s + r$, $r\in (0,1]$ and $s = \left \lfloor \beta \right \rfloor \in \mathbb{N}_0$. 
For a finite constant $B_0>0$, the $H\ddot o lder$ class $\mathcal{H}_\beta([0,1]^d,B_0)$ is defined as
\begin{eqnarray*}
\begin{aligned}
    		\mathcal{H}_\beta([0,1]^d,B_0)=\{&g:[0,1]^d \mapsto \mathbb{R},\max_{\| \alpha \|_1 < s} \| \partial^{\alpha} g \|_{\infty} \leq B_0, \\
            &\max_{\| \alpha \|_1 = s} \sup_{x \neq y} \frac{\left | \partial^{\alpha} g(x) - \partial^{\alpha} g(y) \right |}{\| x-y \|^r_2} \leq B_0       \},
\end{aligned}
	\end{eqnarray*}
where $\partial^\alpha = \partial^{\alpha_1} \cdots \partial^{\alpha_d}$ with $\alpha=(\alpha_1,\cdots,\alpha_d)^{\top} \in \mathbb{N}_0^d$ and $\| \alpha \|_1 = \sum_{i=1}^{d}|\alpha_i|$.

Based on Theorem 3.3 of \citet{jiao2023deep} for the approximation error in terms of FNN structures and Theorem 3 and 7 of \citet{bartlett2019nearly} for the bounding covering number, we can conclude
the following Corollary~\ref{cor:rate} from Theorem~\ref{th:cons}:
\begin{corollary}
\label{cor:rate}
Given $H\ddot o lder$ smooth functions $g^* \in \mathcal{H}_\beta([0,1]^d,B_0)$, for any $D \in \mathbb{N}^+$, $W \in \mathbb{N}^+$, under conditions of Theorem~\ref{th:cons}, conditions of Theorem 3.3 in \citet{jiao2023deep} and Theorem 3 and 7 in \citet{bartlett2019nearly}, if the FNN with a ReLU activation function has width $\mathcal{W} = c (\left \lfloor \beta \right \rfloor +1)^{2}d^{\left \lfloor \beta \right \rfloor+1}W\left \lceil \log_2(8W) \right \rceil$ and depth $\mathcal{D} = c(\left \lfloor \beta \right \rfloor +1)^{2}D\left \lceil \log_2(8D) \right \rceil$ and $\lambda_1 \preceq n^{-1}\mathcal{S}\mathcal{D} \log (\mathcal{S})/d  + (WD)^{-4\beta/d}/d$ and $\lambda_2 \preceq n^{-1}\mathcal{S}\log (\mathcal{S})/\mcW^2  + (WD)^{-4\beta/d}/(\mcW^2\mcD)$, then
\begin{eqnarray*}
\mbE[d((\hat{\balpha}, \hat{\btheta}), \bTheta)] &\preceq& 
(\log^2 n)n^{-1}\mathcal{S}\mathcal{D} \log (\mathcal{S})  + (WD)^{-4\beta/d} (\log^2 n). \\
\mbE\|\hat\balpha_c\|^2_2 &\preceq& (\log^2 n)n^{-1}\mathcal{S}\mathcal{D} \log (\mathcal{S})  + (WD)^{-4\beta/d} (\log^2 n).
\end{eqnarray*}
\end{corollary}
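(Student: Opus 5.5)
The corollary follows from Theorem~\ref{th:cons} once its two abstract terms are made explicit: the stochastic term $\log^2 n\,\log\mcN_{2n}/n$ and the approximation term $\log^2 n\,(\mcR(g^*_{\mcG})-\mcR(g^*))$. We bound $\log\mcN_{2n}$ by the pseudo-dimension of ReLU networks and bound the approximation error with the Hölder approximation theorem of \citet{jiao2023deep}. We then check that the stated choices of $\lambda_1,\lambda_2$ fall within the ranges allowed by Theorem~\ref{th:cons}.

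\emph{Covering number.} For the class $\mcG(\balpha,\btheta)$, the gating layer $\balpha\odot\bX$ is a diagonal linear layer. The class is therefore contained in a ReLU network class with one extra layer and at most $\mcS+d$ parameters, which is still of order $\mcS$ and depth $\mcD+1$. Theorems 3 and 7 of \citet{bartlett2019nearly} give pseudo-dimension $\mathrm{Pdim}\preceq \mcS\mcD\log\mcS$. The standard bound for a class uniformly bounded by $\mcB$ (condition (C3)) then gives
\[
\log\mcN_{2n}(\delta,\|\cdot\|_\infty,\mcG|_{\bx})\preceq \mathrm{Pdim}\,\log\!\Big(\frac{e n\mcB}{\delta\,\mathrm{Pdim}}\Big).
\]
We take $\delta\asymp 1/n$, as in the proof of Theorem~\ref{th:cons}. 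The extra logarithm is absorbed into the existing $\log^2 n$ factor or into $\log\mcS$, since $\mcS$ grows polynomially with $n$ in any regime where the bound is nontrivial. Hence $\log\mcN_{2n}\preceq \mcS\mcD\log\mcS$ up to constants and logarithmic factors already present.

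\emph{Approximation error.} The squared loss gives $\mcR(g)-\mcR(g^*)=\mbE\{g(\bX)-g^*(\bX)\}^2\le\|g-g^*\|_\infty^2$. Since $g^*$ depends only on the $d$ important coordinates, we fix the gate with $\balpha_s=\bm 1$ and $\balpha_c=\bm 0$. Theorem 3.3 of \citet{jiao2023deep} then gives a network $\phi$ with the stated width $\mcW$ and depth $\mcD$ such that $\|\phi-g^*\|_{L^\infty([0,1]^d)}\preceq (WD)^{-2\beta/d}$. Here we treat $d$, $\beta$ and $B_0$ as constants in the prefactor $18B_0(\lfloor\beta\rfloor+1)^2d^{\lfloor\beta\rfloor+(\beta\vee1)/2}$. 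If needed, we compose $\phi$ with a clipping at $\pm\mcB$ realized by two ReLU units, which preserves $\|\cdot\|_\infty\le\mcB$ and does not increase the error. This yields $\mcR(g^*_{\mcG})-\mcR(g^*)\le\mcR(g_{\mathbf 1,\mathbf 0,\phi})-\mcR(g^*)\preceq (WD)^{-4\beta/d}$. The same bound holds for the reference pair $(\check\balpha,\check\btheta)$ in the $\lambda_2$ condition. We would take that pair to be this constructed network or an element of $\bTheta$; in either case its excess risk is at most $(WD)^{-4\beta/d}$.

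\emph{Conclusion and main obstacle.} Substituting both bounds into the tuning-parameter conditions of Theorem~\ref{th:cons} shows that the stated $\lambda_1,\lambda_2$ are admissible. The theorem's two displayed inequalities then become exactly the displays of the corollary. The step needing most care is the covering-number bound. It must hold uniformly over $\bx$ for the composite gated class, with the scale $\delta$ tied to the proof of Theorem~\ref{th:cons}, so that the stray $\log n$ from $\log(1/\delta)$ is genuinely absorbed and does not add a further factor. If it cannot be absorbed, the honest statement carries an extra $\log n$, and I would record it explicitly. A secondary point is that the conditions of Theorem 3.3 in \citet{jiao2023deep} require exactly the width and depth parametrization stated in the corollary, which is assumed.
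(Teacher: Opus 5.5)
Your route is the paper's. Its justification of Corollary~\ref{cor:rate} is to insert the covering bound of Lemma~\ref{lem2} at $\delta=n^{-1}$ and the squared approximation error from Theorem~3.3 of \citet{jiao2023deep} into Theorem~\ref{th:cons}, which is what you do.

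The step you flag as the main obstacle is a real gap in your version. The pseudo-dimension bound you invoke (Theorem~12.2 of \citet{anthony1999neural} with $\mathrm{Pdim}\preceq\mcS\mcD\log\mcS$) gives, at $\delta=n^{-1}$, $\log\mcN_{2n}\preceq\mcS\mcD\log\mcS\cdot\log n$. That is a \emph{multiplicative} extra $\log n$, and neither absorption you propose works:
\begin{itemize}
\item polynomial growth of $\mcS$ only turns $\log\mcS\cdot\log n$ into $\log^2\mcS$, not $\log\mcS$;
\item folding it into the existing factor produces $\log^3 n$.
\end{itemize}
The paper reaches the stated form another way. Lemma~\ref{lem2} quotes the covering bound in the additive form $O(\mcS\mcD\log(\mcS/\delta))=O(\mcS\mcD(\log\mcS+\log n))$, and the text then silently writes $\log\mcS$. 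With that additive form, an argument like yours does absorb the $\log n$. However, the additive form does not follow from the pseudo-dimension route without weight bounds. So as written you establish the corollary only with an extra $\log n$ in the first term. That is your stated fallback, and it is the honest conclusion from the tools you use.

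Two further points, both also glossed in the paper.

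First, Theorem~3.3 of \citet{jiao2023deep} (Lemma~\ref{lem1}) bounds $|g^*-g^*_{\mcG}|$ only off the trifling region $\Omega([0,1]^d,K,\delta)$, not in $L^\infty([0,1]^d)$. To conclude $\mcR(g^*_{\mcG})-\mcR(g^*)\preceq (WD)^{-4\beta/d}$ you need extra input. One option is that the law of $\bX^s$ is absolutely continuous: the clipped error is at most $2\mcB$ on $\Omega$, whose probability vanishes as $\delta\to 0$ with the architecture fixed. The other is a uniform-approximation variant, at the cost of a constant factor in width.

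Second, the tuning conditions of Theorem~\ref{th:cons} bound $\lambda_1^2,\lambda_2^2$ by the \emph{true} $\log\mcN_{2n}$ and approximation error. Upper bounds on those quantities therefore cannot certify that your $\lambda$'s are admissible, so the black-box appeal is not valid as stated. Instead, use the last display of the theorem's proof,
$\mbE[d((\hat\balpha,\hat\btheta),\bTheta)]\le c\{\log^2 n\log\mcN_{2n}/n+(\mcR(g_{\check\balpha,\check\btheta})-\mcR(g^*))+\lambda_1^2 d+\lambda_2^2\mcW^2\mcD\}$.
Insert your bounds into it, and check directly that the corollary's unsquared conditions on $\lambda_1,\lambda_2$ keep $\lambda_1^2 d+\lambda_2^2\mcW^2\mcD$ of no larger order. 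This holds whenever $n^{-1}\mcS\mcD\log\mcS$ is bounded.
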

To facilitate reading, Theorem 3.3 of \citet{jiao2023deep} and Theorems 3 and 7 of \citet{bartlett2019nearly} are also shown in Lemmas~ and~ in Appendix. In Corollary~\ref{cor:rate}, the first term comes from the covering number of $\mcG$, which is bounded by its VC dimension
$\log \mcN_{2n}(n^{-1}, \|\cdot\|_{\infty}, \mcG|_{\bx}) = O(\mcS \mcD \log (\mcS/n^{-1}))$ \citep{bartlett2019nearly}, where $\mcS$ and $\mcD$ are the total number of parameters and hidden layers, respectively. The second term follows from the approximation results from \citet{jiao2023deep} that
$\left \|g^* - g_{\mcG}^*\right \|_{\infty} \leq 18 B_0(\left \lfloor \beta \right \rfloor +1)^{2}d^{\left \lfloor \beta \right \rfloor+\max\{ \beta, 1 \}/2}(WD)^{-2\beta/d}$ and $\mbE(\mcR(g_{\mcG}^*) - \mcR(g^*)) \simeq \mbE|g_{\mcG}^* - g^*|^2$, where $A \simeq B$ represents $A \preceq B$ and $B \preceq A$. 

\begin{corollary}
\label{th:cor}
Suppose the conditions of Corollary~\ref{cor:rate} hold. If $\lambda_1 \preceq n^{-1}\mathcal{S}\mathcal{D} \log (\mathcal{S})/d  + (WD)^{-4\beta/d}/d$ and $\lambda_2 \preceq n^{-1}\mathcal{S}\log (\mathcal{S})/\mcW^2  + (WD)^{-4\beta/d}/(\mcW^2\mcD)$, for any 
$j = 1, \cdots, d$ and $k = d+1, \cdots, p$, it holds that 
\[
\text{Pr}\left( \{ |\hat\alpha_j| \geq \tau_d\} \cap \{ |\hat\alpha_k| \leq \tau_d \} \right) \to 1,
\]
where $\tau_d$ is defined in Condition~(C1).
\end{corollary}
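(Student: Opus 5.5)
The plan is to derive Corollary~\ref{th:cor} from the $L_2$ rates of Corollary~\ref{cor:rate} via Markov's inequality, together with the separation property of $\bTheta$ (Lemma~\ref{th:le1}). The key ingredients are two. First, for every $(\balpha,\btheta)\in\bTheta$ we have $\balpha_c=\bm 0$ and $\min_{1\le j\le d}|\alpha_j|\ge c$. Second, under the stated choices of $\lambda_1,\lambda_2$, both $\mbE[d((\hat\balpha,\hat\btheta),\bTheta)]$ and $\mbE\|\hat\balpha_c\|_2^2$ are bounded by
\[
r_n:=(\log^2 n)n^{-1}\mcS\mcD\log(\mcS)+(WD)^{-4\beta/d}\log^2 n .
\]
We choose $W,D$ growing with $n$ so that $r_n\to 0$. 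For example, take $WD\asymp n^{d/(2d+4\beta)}$ up to logarithms, so that $\mcS\mcD\log\mcS=o(n/\log^2 n)$ while $(WD)^{-4\beta/d}\log^2 n\to 0$. Both rates then vanish.

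\textbf{Unimportant coordinates.} For $k>d$, we have $|\hat\alpha_k|\le\|\hat\balpha_c\|_2$. Markov's inequality gives
\[
\Pr(|\hat\alpha_k|>\tau_d)\le \tau_d^{-2}\,\mbE\|\hat\balpha_c\|_2^2\preceq r_n/\tau_d^2\to0,
\]
since $\tau_d$ is a fixed positive constant by (C1), with $d$ fixed.

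\textbf{Important coordinates.} For $j\le d$, let $\bvartheta^*=(\balpha^*,\btheta^*)\in\bTheta$ attain the minimum in $d(\cdot,\bTheta)$. Such a minimizer exists if $\bTheta$ is closed, which holds by continuity of $\mcR$ in the parameters on a bounded parameter set. Otherwise we work with an approximate minimizer. Then
\[
|\hat\alpha_j|\ge|\alpha^*_j|-|\hat\alpha_j-\alpha^*_j|\ge c-d((\hat\balpha,\hat\btheta),\bTheta)^{1/2}.
\]
Hence $\{|\hat\alpha_j|<\tau_d\}\subseteq\{d((\hat\balpha,\hat\btheta),\bTheta)>(c-\tau_d)^2\}$. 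By Markov's inequality the latter event has probability at most $r_n/(c-\tau_d)^2\to0$.

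This requires $\tau_d<c$, where $c$ is the lower bound from Lemma~\ref{th:le1}. I would secure this by relating the two constants. The signal gap in (C1) is at least $c\tau_d$, and in Lemma~\ref{th:le1} it is this gap that forces $|\alpha_j|$ away from zero. Since the constant $c$ in (C1) is at our disposal, we can take $\tau_d$ no larger than a fixed fraction of the lower bound on $\min_j|\alpha_j|$. If that fails, I would simply redefine the threshold as $\tau_d\wedge(c/2)$.

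\textbf{Conclusion and main obstacle.} A union bound over the two complementary events yields
\[
\Pr\bigl(\{|\hat\alpha_j|\ge\tau_d\}\cap\{|\hat\alpha_k|\le\tau_d\}\bigr)\ge 1-O(r_n)\to1 .
\]
The main obstacle is this compatibility between $\tau_d$ and the lower bound $c$ on $\min_j|\alpha_j|$ over $\bTheta$. Because the paper identifies $\bTheta$ only up to the symmetries of the network, I would need to check that Lemma~\ref{th:le1}'s bound is uniform over all of $\bTheta$, including rescalings of $\balpha$ absorbed by $\bm W_0$. The $\ell_1$ penalty with $\lambda_1>0$ helps here, but the uniformity must be handled explicitly.
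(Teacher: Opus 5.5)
Your reduction is the one the paper has in mind. The paper gives no separate proof of this corollary; it relies on Corollary~\ref{cor:rate} plus the remark that $d((\hat\balpha,\hat\btheta),\bTheta)\to 0$ keeps $\hat\balpha_s$ away from zero via Lemma~\ref{th:le1}(i). Your Markov bound for $k>d$ and your choice $WD\asymp n^{d/(2d+4\beta)}$, which makes $r_n\to0$, are fine.

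The gap is in the step for $j\le d$. It needs $\inf_{(\balpha,\btheta)\in\bTheta}\min_{j\le d}|\alpha_j|\ge c>\tau_d$. You list this as something still to check, but it is false in the paper's parameterization. Because $h_0(\balpha\odot\bx)=\boldsymbol{W}_0\,\mathrm{diag}(\balpha)\,\bx+c_0$, the rescaling $(\balpha,\boldsymbol{W}_0)\mapsto(t\balpha,t^{-1}\boldsymbol{W}_0)$ with $t>0$ leaves the function $g$ unchanged. The same holds coordinatewise, rescaling only $\alpha_j$ and the $j$-th column of $\boldsymbol{W}_0$. Hence $\mcR$, $\mcR_n$, the constraint $\|g\|_\infty\le\mcB$ and the depth penalty (which only involves $l\ge 1$) are all unchanged. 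So $\bTheta$ is invariant under this rescaling, $\inf_{\bTheta}|\alpha_j|=0$, and a small $d((\hat\balpha,\hat\btheta),\bTheta)$ says nothing about the size of $\hat\alpha_j$.

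The $\ell_1$ penalty works against you, not for you. For $\lambda_1>0$, the objective in \eqref{eq:fullobj} strictly decreases along this orbit as $t\downarrow 0$. So any exact minimizer has $\hat\balpha=\bm 0$, and any near-minimizer can be rescaled so that every $|\hat\alpha_j|$ is arbitrarily small. The event $\{|\hat\alpha_j|\ge\tau_d\}$ therefore cannot be derived as stated.

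The same hole sits in the paper's proof of Lemma~\ref{th:le1}(i), at the step asserting $|g_{\btheta^*}(\balpha^*\odot\bX_{[j]}(0))-g_{\btheta^*}(\balpha^*\odot\bX)|\le c\tau_d$. That step tacitly assumes a Lipschitz bound on $g_{\btheta^*}$ that is uniform over $\bTheta$. Since $\mcW$ and $\mcD$ grow with $n$, the bound would also have to be uniform in $n$.

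What is missing is a normalization hypothesis that breaks this symmetry. One example is a compact parameter set on which the networks are uniformly Lipschitz in their input, with constants not growing with $n$. Under such a hypothesis, an argument like that of Lemma~\ref{th:le1}(i) can give $\min_{j\le d}|\alpha_j|\ge c_0$ uniformly over $\bTheta$, and your Markov argument then finishes the proof.

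Your threshold adjustment $\tau_d\wedge(c/2)$ is acceptable only because (C1) pins down just the product $c\tau_d$. Taken literally, Lemma~\ref{th:le1} gives only the non-strict bound $|\alpha_j|\ge\tau_d$. Convergence to a point with $|\alpha_j|=\tau_d$ would not give $\Pr(|\hat\alpha_j|\ge\tau_d)\to1$, so a strict margin has to be part of the hypothesis.

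Finally, your claim that every element of $\bTheta$ has $\balpha_c=\bm 0$ is false by the same mechanism. If the corresponding columns of $\boldsymbol{W}_0$ vanish, $\balpha_c$ is arbitrary, and Lemma~\ref{th:le1}(ii) only gives existence. This is harmless, because you bound $\mbE\|\hat\balpha_c\|_2^2$ directly.
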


\section{Computation}

Optimising the composite objective
\[
\mathcal{J}(g,\boldsymbol{\alpha})
=\frac{1}{n}\sum_{i=1}^{n}\bigl\{Y_{i}-g(\boldsymbol{\alpha}\odot\boldsymbol{X}_{i})\bigr\}^{2}
+\lambda_{1}\lVert\boldsymbol{\alpha}\rVert_{1}
+\lambda_{2}\sum_{l=1}^{\mcD}(\lVert \boldsymbol{W}_l-\boldsymbol{I}\rVert_{1}+|c_l|)
\tag{11}
\]
is challenging because the $L_{1}$ penalties on $\boldsymbol{\alpha}$ and on each $\boldsymbol{W}_l-\boldsymbol{I}$ are nondifferentiable at zero, yet we need exact zeros to identify both irrelevant inputs and redundant layers.  We therefore employ a periodic hard‐thresholding (truncation). Continuous shrinkage via these gradient descent steps ensures that small coefficients approach zero, while exact zeros are enforced by truncation every $K$ iterations.

In the truncation step, any $\alpha_{j}$ satisfying $|\alpha_{j}|\le\tau_{1}$ is set to zero, and any layer $l$ for which $\|\boldsymbol{W}_l-\boldsymbol{I}\|_{1}\le\tau_{2}$ is collapsed by resetting $\boldsymbol{W}_l=\boldsymbol{I}$.  Threshold $\tau_{1}$ is selected based on validation performance, and we follow \citet{scardapane2017group} in fixing $\tau_{2}=10^{-2}$ for synthetic studies and $10^{-3}$ for real data.  By combining subgradient shrinkage with periodic truncation, the algorithm achieves stable convergence, exact sparsity in both $\boldsymbol{\alpha}$ and network depth, and full compatibility with GPU‐accelerated training pipelines.

\begin{algorithm}[H]
\caption{Truncation procedure}\label{alg:trunc}
\DontPrintSemicolon
\SetKwInOut{Input}{Input}\SetKwInOut{Output}{Output}
\Input{gate vector $\boldsymbol{\alpha}$; weight matrices $\{\boldsymbol{W}_l\}$;
thresholds $\tau_{1},\tau_{2}$}
\Output{$\boldsymbol{\alpha}_{\mathrm{trunc}},\{\boldsymbol{W}_l^{\mathrm{trunc}}\}$}
$\boldsymbol{\alpha}_{\mathrm{trunc}}\gets\boldsymbol{\alpha}$\;
\For{$j=1$ \KwTo $d$}{
  \If{$\lvert\alpha_{j}\rvert\le\tau_{1}$}{$\alpha_{j}\gets0$}
}
\For{$l=1$ \KwTo $D$}{
  \If{$\lVert \boldsymbol{W}_l-\boldsymbol{I}\rVert_{1}\le\tau_{2}$}{$\boldsymbol{W}_l\gets \boldsymbol{I}$}
}
\Return $\boldsymbol{\alpha}_{\mathrm{trunc}},\{\boldsymbol{W}_l^{\mathrm{trunc}}\}$
\end{algorithm}

\section{Simulations}
\label{sec:sim}

We conduct simulation studies to evaluate the empirical performance of the proposed neural network-based variable selection method. In particular, we assess its ability to identify the true set of active predictors and compare it against several widely used approaches: Bayesian Kernel Machine Regression (BKMR), SHapley Additive exPlanations (SHAP), and DeepLIFT. The evaluation focuses on variable selection accuracy under a complex nonlinear model with sparse signal.

\subsection{Consistency of Variable Selection}

Data are generated from the nonlinear model
\[
Y = X_{0}^{3}\,(X_{2}^{2}+X_{5}) \;-\; |X_{7}|\,\cos(X_{8}) \;+\;\varepsilon,
\]
where $\varepsilon\sim N(0,1)$ and each predictor $X_j\sim N(0,1)$ independently for $j=0,\dots,199$.  Only five coordinates $\{0,2,5,7,8\}$ enter the truth, while the remaining $d-5=195$ variables are noise.  We take $n=1000$ samples and repeat the entire simulation $100$ times to assess variability.

We compare our method against five established approaches. Bayesian kernel machine regression (BKMR) \citep{bobb2015bayesian} fits a Gaussian-process-type model and uses posterior inclusion probabilities for variable selection. SHapley additive explanations (SHAP) \citep{lundberg2017unified} compute Shapley values post hoc on a pretrained neural network to rank feature importance. DeepLIFT \citep{shrikumar2017learning} is a backpropagation-based attribution method that compares activations to a reference and assigns contribution scores to each input. GLNN \citep{scardapane2017group} applies group Lasso regularization to neural networks, penalizing groups of weights. Deep feature selection (DPS) \citep{li2016deep,chen2021nonlinear} incorporates sparsity constraints via selection layers embedded in deep architectures. For SHAP and DeepLIFT, we use the same network architecture as our method but without additional adaptive thresholding or depth adjustment, while for GLNN and DPS, we use the architectures specified in their original papers. For all competing methods except BKMR and the proposed method, the top five features are selected by weight magnitude.

Selection accuracy is quantified by the following metrics:
\[
\text{Precision}
= \frac{\lvert \hat{S}_n \cap S^\star\rvert}{\lvert \hat{S}_n\rvert},
\quad
\text{Recall}
= \frac{\lvert \hat{S}_n \cap S^\star\rvert}{\lvert S^\star\rvert},
\quad
F_1
= \frac{2\,\text{Precision}\times \text{Recall}}{\text{Precision} + \text{Recall}}.
\]
Table~\ref{tab:selection-performance} reports the mean and standard deviation of these metrics over 100 simulation replicates.

\begin{table}[H]
\centering
\caption{Variable selection performance across competing methods.}
\label{tab:selection-performance}
\begin{tabular}{lccc}
\hline
\hline
Method     & Precision      & Recall         & F1 Score       \\
\hline
Proposed   & 0.927 (0.150)  & 0.880 (0.183)  & 0.879 (0.135)  \\
BKMR       & 0.025 (0.000)  & 1.000 (0.000)  & 0.049 (0.000)  \\
SHAP       & 0.780 (0.060)  & 0.780 (0.060)  & 0.780 (0.060)  \\
DeepLIFT   & 0.720 (0.133)  & 0.720 (0.133)  & 0.720 (0.133)  \\
GLNN & 0.460 (0.100)  & 0.460 (0.100)  & 0.460 (0.100)  \\
DPS        & 0.244 (0.083)  & 0.244 (0.083)  & 0.244 (0.083)  \\
\hline
\hline
\end{tabular}
\end{table}

As shown in Table~\ref{tab:selection-performance}, our proposed method achieves an $F_1$ score of approximately 0.88, significantly outperforming competing approaches. BKMR, despite achieving perfect recall, exhibits extremely low precision and consequently poor overall selection performance ($F_1 \approx 0.05$), indicating severe over-selection. SHAP and DeepLIFT yield moderate selection accuracy but lag behind our proposed method, likely due to their post hoc attribution nature that ignores structured sparsity during training. GLNN and DPS show considerably weaker performance, reflecting limitations in selecting highly relevant nonlinear features under limited depth or overly restrictive sparsity constraints. These findings underscore the advantage of embedding adaptive sparsity constraints and automatic depth selection directly into the neural network training process, thereby substantially enhancing variable selection accuracy and interpretability in high-dimensional, nonlinear modeling scenarios.

\subsection{Post‐selection Inference}

After demonstrating the consistency of our variable selection procedure, we examine its impact on post‐selection inference.  To this end, we test the null hypothesis
\[
H_0:\;E(Y\mid X_{1},Z)=E(Y\mid Z),
\]
where $Z$ denotes the set of covariates conditioned upon.  In the “no‐selection’’ scenario, $Z$ comprises all predictors except $X_{1}$; in the post‐selection scenario, $Z$ includes only those features retained by our variable‐selection procedure (excluding $X_{1}$).  We then apply the permutation‐based test outlined in Section \ref{sec:inf} to the inference data $D_2$ under both scenarios, and report the empirical Type I error rates in Table~\ref{tab:type1}.

\begin{table}[H]
\centering
\caption{Empirical Type I error rates for testing the null hypothesis $H_0: E(Y \mid X_{1},Z) = E(Y \mid Z)$ with sample size $n=1000$ and 500 simulation replicates. Results are shown for inference conducted with and without a preceding variable selection step.}
\label{tab:type1}
\begin{tabular}{l|c}
\hline
\hline
Inference setting    & Type I error \\
\hline
No selection         & 0.13 \\
Post‐selection       & 0.01 \\
\hline
\hline
\end{tabular}
\end{table}

Table~\ref{tab:type1} demonstrates that inference without prior selection inflates the Type I error to 13\%, whereas conditioning on the data‐driven subset restores nominal control at 1\%.  This confirms that our post‐selection inference procedure maintains valid error rates even after feature selection.  

\section{Real Data Application}
\label{sec:real}

The real data analysis uses the Early Life Exposure to Environmental Toxicants (ELEMENT) Project. ELEMENT consists of three mother–child cohorts recruited in Mexico City from 1994 to 2005. More than 2,000 women and their children were enrolled and followed from pregnancy through adolescence. We apply these data to evaluate a nonparametric Neural Network Machine Regression (NNMR) method with statistical inference. Our main goal is to identify which prenatal exposures are most strongly associated with infant growth. Infant growth is measured by WHO anthropometric z-scores considering age and sex. Measurements are collected at regular visits from birth to age five, as well as during cholesterol substudies and at the P20 follow up. The analysis includes a high dimensional panel of dietary patterns grouped by food categories for both mothers and children. We exclude records with missing or implausible z-scores and standardize all exposure variables before modeling.

We evaluated the performance of our Neural Network Machine Regression (NNMR) method for variable selection in high dimensional dietary data, using infant growth z-scores as the outcome. We compared NNMR to the competing method mentioned in section \ref{sec:sim}. All methods were applied to a dataset containing child and mother dietary exposures, with performance evaluated by Akaike Information Criterion (AIC) calculated on separate training and test datasets.

\begin{table}[htbp]
\centering
\caption{Average AIC and computation time (seconds) for six competing methods applied to the ELEMENT dietary exposure data. Reported values are means with standard deviations in parentheses, calculated on test sets across 100 replicates.}
\label{tab:aic}
\begin{tabular}{c|c|c}
\hline\hline
Method     & AIC      & Time (s)\\
\hline
NNMR       & 2164.98 (23.66)   & 6.74 (1.20) \\
BKMR       & 2230.27 (23.71)   & 78.21 (20.31) \\
SHAP       & 2166.99 (25.33)   & 87.01 (8.86) \\
DeepLIFT   & 2162.45 (23.33)   & 3.66 (0.56) \\
GroupLASSO & 2195.56 (41.16)   & 2.68 (0.43) \\
DFS        & 2167.62 (23.36)   & 7.69 (1.59) \\
\hline\hline
\end{tabular}
\end{table}

Table~\ref{tab:aic} shows that NNMR achieves competitive AIC values and computation times compared with DeepLIFT, while outperforming SHAP, DFS, and Group LASSO, and substantially improving upon BKMR. The poorer performance of BKMR reflects its tendency to select excessively large models, often including nearly all variables. DeepLIFT and SHAP yield prediction errors comparable to NNMR, since they share the same underlying model, but DeepLIFT tends to select the smallest variable set. Group LASSO exhibits unstable behavior, either selecting nearly all variables or only one or two. By contrast, NNMR demonstrates modest variability in AIC across splits, highlighting its stable generalization ability. 

\begin{figure}[!h]
    \centering
    \includegraphics[]{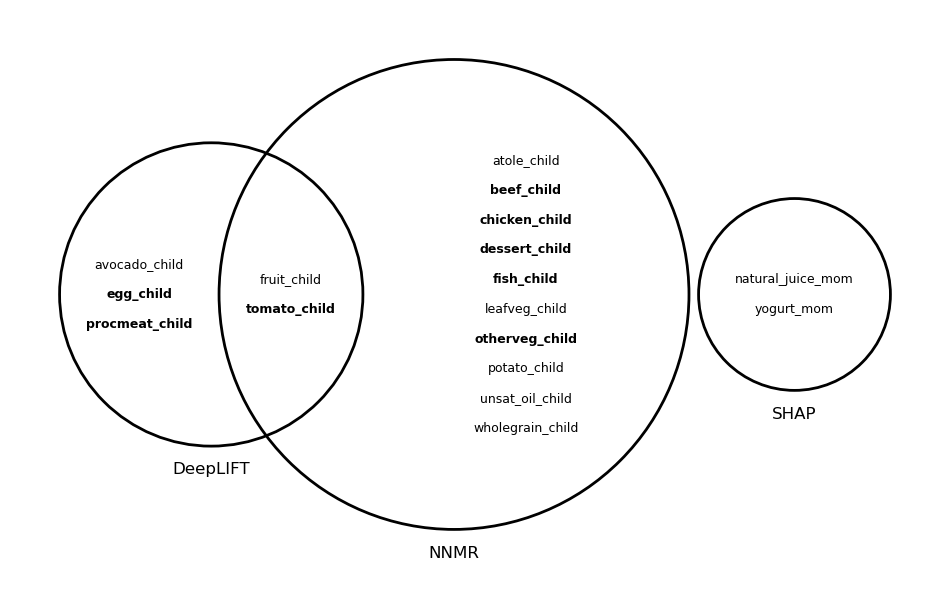}
    \caption{Venn diagram of dietary exposures selected by NNMR, SHAP, and DeepLIFT in the ELEMENT study. Displayed are variables with selection frequency exceeding 20\% across 100 replicates. Variables in bold indicate those that remained statistically significant after applying the post-selection inference procedure.}
    \label{fig:vs}
\end{figure}

Figure~\ref{fig:vs} summarizes variable selection via a Venn diagram, comparing NNMR, SHAP, and DeepLIFT. To improve interpretability, we display only variables selected with frequency exceeding 20\% across 100 replicates. The complete results are provided in the Supplementary Materials. Consistent with prior observations, BKMR selects nearly all exposures, while Group LASSO exhibits unstable behavior. DFS selects only \texttt{milk\_child}, while SHAP emphasizes maternal food groups. In contrast, NNMR and DeepLIFT consistently highlight child dietary exposures with higher and more coherent selection frequencies.

We then applied the inference procedure in Section~\ref{sec:inf} to the union of variables identified by NNMR, DFS, DeepLIFT, and SHAP, using an inference split not employed for selection. Variables highlighted in red in Figure~\ref{fig:vs} were found to be statistically significant predictors of growth z-scores, the majority of which were prioritized by NNMR. 

The five variables uniquely retained by NNMR after inference align closely with established findings in the nutritional epidemiology literature. Higher intake of beef, zinc, and choline during infancy has been linked to improved inhibitory control and attention at ages 3 to 5 years \citep{wilk2022early}. Chicken production and consumption in nutrition sensitive agricultural programs have been shown to benefit child growth in low income settings \citep{passarelli2020chicken}. Fish serves as a nutrient-dense protein source, rich in vitamins, minerals, and essential fatty acids \citep{yilmaz2018importance}. Frequent dessert consumption is associated with increased risk of overweight, a major determinant of growth z-scores \citep{barroso2016food}. Finally, the \texttt{otherveg} category, which includes zucchini, cucumber, and green beans, reflects the importance of dietary diversity for child development \citep{arimond2004dietary, thorne2019dietary}. Collectively, these results underscore the biological plausibility of the predictors identified by NNMR and their relevance to early life growth. 

\section{Discussion}
We have introduced Neural Network Machine Regression (NNMR), an integrated neural network framework designed for simultaneous feature selection, nonlinear function estimation, and rigorous post-selection inference in high-dimensional data analysis. By embedding a trainable gating layer coupled with an $L_1$ regularization strategy and an adaptive thresholding mechanism, NNMR achieves direct sparsity enforcement in input features and hidden network layers, resulting in highly compact and interpretable models.

Our theoretical analysis provides guarantees for consistent recovery of the relevant feature set, supported by a minimax-optimal upper bound on the risk under mild assumptions. Additionally, our integrated split-sample permutation testing approach ensures robust control over type I error rates, mitigating false-positive risks commonly encountered in high-dimensional inference.

Empirical results from extensive simulations illustrate that NNMR outperforms established approaches such as BKMR, SHAP, DeepLIFT, GLNN, and DPS in terms of variable selection precision, recall, and overall accuracy (F$_1$ score). While BKMR achieves high recall at the cost of severe over-selection, and SHAP and DeepLIFT offer moderate accuracy, NNMR clearly demonstrates superior performance in identifying true predictors and controlling selection errors. GLNN and DPS exhibit considerably lower accuracy, underscoring the limitations of traditional sparsity approaches in capturing complex nonlinear relationships.

Real-data applications further validate NNMR’s practical utility by demonstrating its capability to pinpoint meaningful, interpretable predictors in complex biomedical datasets. Unlike post hoc attribution methods, which are often unstable under feature collinearity, NNMR’s built-in sparsity ensures accurate identification of genuinely inactive features. The dynamic pruning of unnecessary layers also facilitates resource-efficient implementations suitable for GPU acceleration.

In summary, NNMR bridges interpretability and predictive modeling power by embedding statistically rigorous variable selection, adaptive model optimization, and valid inference within a unified training framework. This integrative approach positions NNMR as a robust and versatile solution for modern high-dimensional statistical and deep learning applications.

\bibliography{reflist.bib}
\section*{Appendix}
\subsection*{Full List of Variable Selection Results in ELEMENT Study}

Table~\ref{tab:fullvars_filtered} presents the complete list of variables along with their selection frequencies across repeated data splits for the proposed Neural Network Machine Regression (NNMR), SHAP, DeepLIFT, and DFS methods. For SHAP, only variables with selection frequencies greater than 0.05 are displayed to improve readability. Since BKMR and GroupLASSO consistently selected nearly all variables with frequencies close to 1.0 and 0.45, their results are omitted from the table for brevity. This presentation highlights the variables that were most consistently identified as important by the methods that demonstrated more selective behavior.

\begin{longtable}{lrrrr}
\caption{Variables with nonzero selection frequencies across repeated splits for NNMR, SHAP (only $\ge$ 0.05 shown), DeepLIFT, and DFS. BKMR and GroupLASSO selected nearly all variables (frequencies $\approx$ 1.0 and 0.45) and are omitted for brevity.} \label{tab:fullvars_filtered} \\
\toprule
feature & NNMR & SHAP & DeepLIFT & DFS \\
\midrule
\endfirsthead
\\
\toprule
feature & NNMR & SHAP & DeepLIFT & DFS \\
\midrule
\endhead
\midrule
\multicolumn{5}{r}{Continued on next page} \\
\midrule
\endfoot
\bottomrule
\endlastfoot
$atole\_child$ & 0.38 & -- & -- & -- \\
$avocado\_child$ & 0.15 & -- & 0.25 & -- \\
$beef\_child$ & 0.41 & -- & 0.19 & -- \\
$chicken\_child$ & 0.43 & -- & 0.15 & -- \\
$chili\_child$ & 0.09 & -- & -- & -- \\
$chips\_child$ & 0.02 & -- & -- & -- \\
$corn\_tortilla\_child$ & 0.01 & 0.08 & -- & -- \\
$corncob\_child$ & 0.13 & -- & -- & -- \\
$corncob\_mom$ & -- & 0.06 & -- & -- \\
$cruveg\_child$ & 0.01 & -- & -- & -- \\
$dessert\_child$ & 0.31 & 0.06 & -- & -- \\
$egg\_child$ & 0.16 & -- & 0.30 & -- \\
$fish\_child$ & 0.23 & -- & 0.03 & -- \\
$fish\_mom$ & -- & 0.06 & -- & -- \\
$fruit\_child$ & 0.38 & -- & 0.74 & -- \\
$fruit\_mom$ & -- & 0.07 & -- & -- \\
$hf\_dairy\_child$ & 0.01 & -- & 0.16 & -- \\
$jam\_child$ & 0.01 & -- & -- & -- \\
$leafveg\_child$ & 0.39 & -- & -- & -- \\
$legumes\_child$ & 0.03 & -- & -- & -- \\
$milk\_child$ & 0.02 & -- & 0.19 & 1.00 \\
$natural\_juice\_mom$ & -- & 0.48 & -- & -- \\
$organmeat\_child$ & 0.16 & -- & 0.03 & -- \\
$otherveg\_child$ & 0.35 & -- & -- & -- \\
$pork\_child$ & 0.04 & -- & 0.09 & -- \\
$potato\_child$ & 0.29 & -- & 0.04 & -- \\
$procmeat\_child$ & 0.02 & 0.06 & 0.23 & -- \\
$refgrain\_child$ & 0.13 & 0.12 & -- & -- \\
$refgrain\_mom$ & -- & 0.08 & -- & -- \\
$soup\_child$ & 0.18 & -- & -- & -- \\
$sugar\_beverages\_child$ & 0.05 & 0.07 & -- & -- \\
$sugar\_beverages\_mom$ & -- & 0.13 & -- & -- \\
$tomato\_child$ & 0.39 & -- & 0.22 & -- \\
$unsat\_oil\_child$ & 0.43 & -- & -- & -- \\
$wholegrain\_child$ & 0.26 & -- & -- & -- \\
$yeveg\_child$ & 0.02 & 0.09 & -- & -- \\
$yogurt\_child$ & 0.01 & -- & 0.15 & -- \\
$yogurt\_mom$ & -- & 0.32 & -- & -- \\
\end{longtable}

\subsection*{Proofs of Theorems}
Denote $\mcR_n(g_{\balpha, \btheta}) = \frac{1}{n}\sum_{i=1}^n\big( Y_i - g(\bX_i) \big)^2$, for $g \in \mcG(\balpha,\btheta)$. For any independent and identically distributed (i.i.d.) samples $D_n = \{\bX_i, Y_i\}_{i=1}^n$ with sample size $n$. 
Define 
\[
S(g_{\balpha, \btheta}, \bX_i) = \big(Y_i - g( \bX_i)\big)^2 - \big(Y_i - g^*(\bX_i)\big)^2.
\]
Let $D_n' = \{\bX_i^{\prime}, Y^{\prime}_i\}$ be another sample independent of $D_n$, and write 
\[
L(g_{\balpha, \btheta}, \bX_i^{\prime}) = \mbE_{D'_n}\left(S(g_{\balpha, \btheta}, \bX_i^{\prime}) \right)
 - 2S(g_{\balpha, \btheta}, \bX_i).
 \] 
 Recall that $g_{\btheta}(\balpha \odot \cdot) \in \mcG(\balpha, \btheta)$ and 
 \[
(\hat{\balpha}, \hat{\btheta}) \in \arg \min_{} \mcL_n(g_{\balpha, \btheta}) = 
\arg\min_{(\balpha, \btheta)}\left[ \mcR_n(g_{\balpha, \btheta}) + \lambda_1\|\balpha\|_1 +\lambda_{2}\sum_{l=1}^{\mcD}\lVert \boldsymbol{W}_l-\boldsymbol{I}\rVert_{1} \right].
\] 

 \textbf{Covering number}. Given a $\delta$-uniform covering of $\mcG$, we denote the centers of the balls by $g_q, q =1, \cdots, \mcN_{2n}$, where $\mcN_{2n} = \sup_{\bx}\mcN_{2n}\left(\delta, \|\cdot\|_{\infty}, \mcG|_{\bx} \right)$ is the uniform covering number with
radius $\delta$ under the norm $\|\cdot\|_{\infty}$. By the definition of covering, there exists a $q^*$ such that
$\|g_{\hat{\balpha}, \hat{\btheta}} - g_{q^*}\|_{\infty} \leq \delta$ on $\bx \in (\bX_1, \cdots, \bX_n, \bX_1', \cdots, \bX_{n}')$. Let $A \preceq B$ represent $A \leq c B$ for a postive constant $c$.

\noindent \textbf{Proof of Theorem~\ref{th:cons}}. 
Let $(\check\balpha, \check\btheta) \in \bTheta$ such that 
\[
(\check{\balpha}, \check{\btheta}) \in \arg\min_{(\balpha, \btheta) \in \bTheta} \mbE\bigg[d\bigg(\big(\hat{\balpha}, \hat{\btheta}\big), \bTheta \bigg)\bigg].
\]
Without loss of generality, we slightly abuse notation by writing $\sum_{l=1}^{\mcD}\lVert \boldsymbol{W}_l-\boldsymbol{I}\rVert_{1}$ to denote $\sum_{l=1}^{\mcD}(\lVert \boldsymbol{W}_l-\boldsymbol{I}\rVert_{1}+|c_l|)$, where the intercept term is absorbed into the weight matrix $\boldsymbol{W}_l$ of the neural network. Then, it follows that 
\[
\mcR_n(g_{\hat{\balpha}, \hat{\btheta}}) + \lambda_1\|\hat\balpha\|_1 + \lambda_{2}\sum_{l=1}^{\mcD}\lVert \hat{\boldsymbol{W}}_l-\boldsymbol{I}\rVert_{1} \leq 
\mcR_n(g_{\check{\balpha}, \check{\btheta}}) + \lambda_1\|\check\balpha\|_1 +\lambda_{2}\sum_{l=1}^{\mcD}\lVert \check{\boldsymbol{W}}_l-\boldsymbol{I}\rVert_{1}.
\]

For the expected excess risk, we have the following decomposition,
\begin{eqnarray}
\label{eq:t1}
&&\mbE\bigg(
\mcR(g_{\hat\balpha, \hat\btheta}) - \mcR(g_{\check\balpha, \check\btheta})\bigg) \nonumber\\
&= &\mbE\left( \mcR(g_{\hat\balpha, \hat\btheta}) - \mcR_n(g_{\hat\balpha, \hat\btheta}) \right) + 
\mbE\left( \mcR_n(g_{\hat\balpha, \hat\btheta}) - \mcR_{n}(g_{\check\balpha, \check\btheta}) \right) \\
&& + \mbE\big( \mcR_{n}(g_{\check\balpha, \check\btheta}) - \mcR(g_{\check\balpha, \check\btheta}) \big) \nonumber\\
&\leq & I_1 + \mbE_{D_n}\left[\lambda_1 \left( \|\check\balpha\|_1 - \|\hat\balpha\|_1\right) + \lambda_2 \sum_{l=1}^{\mcD}
\Big(\|\check{\boldsymbol{W}}_l - \boldsymbol{I}\|_1 - \|\hat{\boldsymbol{W}}_l - \boldsymbol{I}\|_1\Big)\right]. 
\end{eqnarray}

Note that the first term has the following upper bound: 
\begin{eqnarray}
\label{eq:t2}
I_1 &:=& \mbE\left( \mcR(g_{\hat\balpha, \hat\btheta}) - \mcR_n(g_{\hat\balpha, \hat\btheta}) \right) \nonumber\\
&=&\mbE_{D_n}\left\{ n^{-1}\sum_{i=1}^n\left[ 
 \mbE_{D'_n}\left( n^{-1}\sum_{i=1}^n \left( Y'_i - g_{\hat{\btheta}}(\hat{\balpha} \odot \bX'_i) \right)^2 \right) \right.\right. \\
 &&\left. \left.-  \left( Y_i - g_{\hat{\btheta}}(\hat{\balpha} \odot \bX_i) \right)^2 \right] \right\} \nonumber\\
 & = & \mbE_{D_n}\left\{ n^{-1}\sum_{i=1}^n\left[ 
 \mbE_{D'_n}\left( n^{-1}\sum_{i=1}^n S(g_{\hat{\balpha}, \hat{\btheta}}, \bX_i^{\prime}) \right)
 - 2S(g_{\hat{\balpha}, \hat{\btheta}}, \bX_i^{\prime}) \right] \right\} \nonumber\\
 && + \mbE_{D_n}\left\{ n^{-1}\sum_{i=1}^n \left[ \big(Y_i - g_{\hat\btheta}(\hat\balpha \odot \bX_i)\big)^2 - \big(Y_i - g_{\check{\btheta}}(\check{\balpha} \odot \bX_i)\big)^2 \right] \right\} \nonumber\\
 && + \mbE_{D_n}\left\{ n^{-1}\sum_{i=1}^n S(g_{\check{\btheta}, \check{\balpha}}, \bX_i^{\prime}) \right\} \nonumber\\
 &\leq& I_{11} + \mbE_{D_n}\left[\lambda_1\left( \| \check{\balpha} \|_1 - \| \hat{\balpha} \|_1\right) + \lambda_2 \sum_{l=1}^{\mcD}
\Big(\|\check{\boldsymbol{W}}_l - \boldsymbol{I}\|_1 - \|\hat{\boldsymbol{W}}_l - \boldsymbol{I}\|_1\Big)\right]  \nonumber\\
&&+ \left( \mcR(g_{\check{\balpha}, \check{\btheta}}) - \mcR(g^*) \right).
\end{eqnarray}

Next, we will give an upper bound of $I_{11}$ and handle it with truncation and classical chaining technique of 
empirical processes. 
According to the definition of $S(g_{\balpha, \btheta}, \bX_i)$, we have 
\begin{eqnarray*}
&&\big| S(g_{\hat\balpha, \hat\btheta}, \bX_i^{\prime}) - S(g_{q^*}, \bX_i^{\prime})\big| \\
&=& 
\big| 2Y_i \big(g_{\hat\btheta}(\hat{\balpha} \odot \bX_i) - g_{q^*}(\bX_i)\big) + \big(g^2_{\hat\btheta}(\hat{\balpha} \odot \bX_i) - g^2_{q^*}(\bX_i)\big)  \big|\\
&\leq& (2|Y_i| + 2\mcB)\delta,
\end{eqnarray*}
and 
\begin{eqnarray*}
&&\big| L(g_{\hat\balpha, \hat\btheta}, \bX_i^{\prime}) - L(g_{q^*}, \bX_i^{\prime})\big| \\
&\leq& 
 \mbE_{D'_n}\left(\big| S(g_{\hat\balpha, \hat\btheta}, \bX_i^{\prime}) - S(g_{q^*}, \bX_i^{\prime})\big| \right) + 2\big| S(g_{\hat\balpha, \hat\btheta}, \bX_i^{\prime}) - S(g_{q^*}, \bX_i^{\prime})\big| \\
 &\leq& 3 (\mbE|Y_i| + \mcB + |Y_i|) \delta.
\end{eqnarray*}
Then, it follows that 
\begin{eqnarray*}
\mbE_{D_n}\left( n^{-1}\sum_{i=1}^n \big| L(g_{\hat\balpha, \hat\btheta}, \bX_i^{\prime}) - L(g_{q^*}, \bX_i^{\prime}) \big| \right) \leq 
6(\mbE|Y_i| + \mcB) \delta,
\end{eqnarray*}
which leads to 
\begin{eqnarray}
\label{eq:lc}
\mbE_{D_n}\left(n^{-1}\sum_{i=1}^n L(g_{\hat\balpha, \hat\btheta}, \bX_i^{\prime})  \right) \leq 
\mbE_{D_n}\left(n^{-1}\sum_{i=1}^n L(g_{q^*}, \bX_i^{\prime})  \right) + 6(\mbE|Y_i| + \mcB) \delta.
\end{eqnarray}

Let $0 < \beta_n$ be a positive number who may depend on the sample size $n$. Denote 
$T_{\beta_n} Y = Y$ if $Y \leq \beta_n$ and $T_{\beta_n} Y = \beta_n$ otherwise. Define the function 
$g^*_{\beta_n}$ by 
\[
g^*_{\beta_n}(\bx) = \arg\min_{g: \|g\|_{\infty} < \mcB} \mbE\left[\big(T_{\beta_n}Y_i - g(\bX_i) \big)^2 \mid \bX_i = \bx \right].
\]
For any $g \in \mcG(\balpha, \btheta)$, let 
$S_{\beta_n}(g_{\balpha, \btheta}, \bX_i^{\prime}) = \big(T_{\beta_n}Y_i - g_{\btheta}(\balpha \odot \bX_i) \big)^2 - 
\big(T_{\beta_n}Y_i - g^*_{\beta_n}(\bX_i) \big)^2
$. Then, we have 
\begin{eqnarray*}
\mbE\big(S(g_{\alpha, \btheta}, \bX_i^{\prime}) \big) &=& 
\mbE\big(S_{\beta_n}(g_{\alpha, \btheta}, \bX_i^{\prime}) \big) + 
\mbE\left[\big(Y_i - g_{\btheta}(\balpha \odot \bX_i) \big)^2 - \big(T_{\beta_n}Y_i - g_{\btheta}(\balpha \odot \bX_i) \big)^2 \right] \\
&&- \mbE\left[\big(Y_i - g^*(\bX_i) \big)^2 - \big(T_{\beta_n}Y_i - g^*(\bX_i) \big)^2 \right] \\
&& - \mbE\left[\big(T_{\beta_n}Y_i - g^*(\bX_i) \big)^2 - \big(T_{\beta_n}Y_i - g^*_{\beta_n}(\bX_i) \big)^2 \right] \\
&\leq & \mbE\big(S_{\beta_n}(g_{\alpha, \btheta}, \bX_i^{\prime}) \big)  + 
\mbE\big(Y_i^2 - (T_{\beta_n}Y_i)^2 - 2(Y_i - T_{\beta_n}Y_i)g_{\btheta}(\balpha \odot \bX_i)\big) \\
&& - \mbE\big(Y_i^2 - (T_{\beta_n}Y_i)^2 - 2(Y_i - T_{\beta_n}Y_i)g^*(\bX_i)\big)\\
&\leq &  \mbE\big(S_{\beta_n}(g_{\alpha, \btheta}, \bX_i^{\prime}) \big)  + 4 \mbE(|Y_i|I(Y_i > \beta_n))\mcB,
\end{eqnarray*}
and 
\begin{eqnarray*}
\mbE\big(S_{\beta_n}(g_{\alpha, \btheta}, \bX_i^{\prime}) \big) &=& 
\mbE\big(S(g_{\alpha, \btheta}, \bX_i^{\prime}) \big) -
\mbE\left[\big(Y_i - g_{\btheta}(\balpha \odot \bX_i) \big)^2 - \big(T_{\beta_n}Y_i - g_{\btheta}(\balpha \odot \bX_i) \big)^2 \right] \\
&&- \mbE\left[\big(T_{\beta_n}Y_i - g^*_{\beta_n}(\bX_i) \big)^2 - \big(Y_i - g^*_{\beta_n}(\bX_i) \big)^2 \right] \\
&& - \mbE\left[\big(Y_i - g^*_{\beta_n}(\bX_i) \big)^2 - \big(Y_i - g^*(\bX_i) \big)^2 \right] \\
&\leq & \mbE\big(S(g_{\alpha, \btheta}, \bX_i^{\prime}) \big)  -
\mbE\big(Y_i^2 - (T_{\beta_n}Y_i)^2 - 2(Y_i - T_{\beta_n}Y_i)g_{\btheta}(\balpha \odot \bX_i)\big) \\
&& + \mbE\big(Y_i^2 - (T_{\beta_n}Y_i)^2 - 2(Y_i - T_{\beta_n}Y_i)g^*_{\beta_n}(\bX_i)\big)\\
&\leq &  \mbE\big(S(g_{\alpha, \btheta}, \bX_i^{\prime}) \big)  + 4 \mbE(|Y_i|I(Y_i > \beta_n))\mcB,
\end{eqnarray*}
which leads to 
\[
\bigg| \mbE\big( S(g_{\alpha, \btheta}, \bX_i^{\prime}) -S_{\beta_n}(g_{\alpha, \btheta}, \bX_i^{\prime}) \big)\bigg|
\leq 4 \mbE(|Y_i|I(Y_i > \beta_n))\mcB. 
\]
Then, 
\begin{eqnarray}
\label{eq:lt}
&&\bigg| \mbE_{D_n}\left[
\frac{1}{n}\sum_{i=1}^n\big( L(g_{q^*}, \bX_i^{\prime}) - L_{\beta_n}(g_{q^*}, \bX_i^{\prime})  \big)
\right] \bigg|  \nonumber\\
&\leq& 
\big| \mbE_{D_n'}\big( S(g_{q^*}, \bX_i^{\prime}) - S_{\beta_n}(g_{q^*}, \bX_i^{\prime}) \big) \big| + 
2\big| \mbE_{D_n} \big( S(g_{q^*}, \bX_i^{\prime}) - S_{\beta_n}(g_{q^*}, \bX_i^{\prime}) \big) \big| \nonumber\\
&\leq&12 \mbE(|Y_i|I(Y_i > \beta_n))\mcB.
\end{eqnarray}

On the other hand, for any $g \in \mcG(\balpha, \btheta)$, we have 
\begin{eqnarray*}
&&|S_{\beta_n}(g, \bX_i^{\prime})| \leq 5(\beta_n + \mcB)^2, \\
&&\sigma^2_{S}(g) := \text{Var}(S_{\beta_n}(g, \bX_i^{\prime})) \leq \mbE\{S^2_{\beta_n}(g, \bX_i^{\prime})\} \leq 
5(\beta_n + \mcB)^2 \mbE(S_{\beta_n}(g, \bX_i^{\prime})).
\end{eqnarray*}
Following the Bernstein inequality, for any $t > 0$, let $u = t/2 + \sigma^2_S(g)/(10(\beta_n + \mcB)^2)$, we have 
\begin{eqnarray*}
&&P\left\{n^{-1}\sum_{i=1}^n L_{\beta_n}(g_{q}, \bX_i^{\prime}) > t \right\}\\
&=& P\left\{ \mbE_{D_n'}\left(S_{\beta_n}(g_{q}, \bX_i^{\prime}) \right)
-n^{-1}2\sum_{i=1}^nS_{\beta_n}(g_q, \bX_i^{\prime}) > t
\right\}\\
    &=& P\left\{  \mbE_{D_n'}\{S_{\beta_n}(g_q, \bX_i^{\prime})\} - \frac{1}{n}\sum_{i=1}^n S_{\beta_n}(g_q, \bX_i^{\prime}) > \frac{t}{2} + \frac{1}{2} \mbE_{D_n'}\{S_{\beta_n}(g_q, \bX_i^{\prime})\} \right\} \nonumber\\
    &\leq& P\left\{  \mbE_{D_n'}\{S_{\beta_n}(g_q, \bX_i^{\prime})\} - \frac{1}{n}\sum_{i=1}^n S_{\beta_n}(g_q, \bX_i^{\prime}) > \frac{t}{2} + \frac{1}{2} \frac{\sigma_S^2(g)}{5(\beta_n + \mcB)^2} \right\} \nonumber\\
    &\leq& \exp ( -\frac{n u^2}{2\sigma_S^2(g)+20u(\beta_n + \mcB)^2/3} ) \nonumber\\
    &\leq& \exp ( -\frac{n u^2}{20u(\beta_n + \mcB)^2 + 20u(\beta_n + \mcB)^2/3} ) \nonumber\\
    &\leq& \exp ( -\frac{1}{20 + 20/3}\frac{n u}{(\beta_n + \mcB)^2} ) \nonumber\\
    &\leq& \exp ( -\frac{1}{40 + 40/3}\frac{n t}{(\beta_n + \mcB)^2} ) \nonumber\\
    &=& \exp\left( - \frac{Cnt}{(\beta_n + \mcB)^2} \right).
\end{eqnarray*}
This leads to a tail probability bound of $n^{-1}\sum_{i=1}^n L_{\beta_n}(g_{q^*}, \bX_i^{\prime})$, that is, 
\[
P\left\{n^{-1}\sum_{i=1}^n L_{\beta_n}(g_{q^*}, \bX_i^{\prime}) > t \right\} \leq 2 \mcN_{2n}\exp\left( - \frac{Cnt}{(\beta_n + \mcB)^2} \right).
\]
Then for $a_n>0$,
\begin{eqnarray*}
\mbE_{D_n}\left[ \frac{1}{n}\sum_{i=1}^nL_{\beta_n}(g_{q^*}, \bX_i^{\prime}) \right]
&\leq& a_n + \int_{a_n}^{\infty}P\left\{ \frac{1}{n}\sum_{i=1}^nL_{\beta_n}(g_{q^*}, \bX_i^{\prime}) > t \right\}dt \nonumber\\
&\leq& a_n + \int_{a_n}^{\infty}2\mcN_{2n}\exp\left( - \frac{Cnt}{(\beta_n + \mcB)^2} \right)dt \nonumber\\
&\leq& a_n + 2\mcN_{2n}\exp\left( -a_n \frac{Cn}{(\beta_n + \mcB)^2} \right)\frac{(\beta_n + \mcB)^2}{Cn}.
\end{eqnarray*}
Choosing $a_n=\log 2\mcN_{2n}\frac{(\beta_n + \mcB)^2}{Cn}$, the above inequality leads to
\begin{eqnarray}
\label{eq:i1eq3-o}
 \mbE_{D_n}\left[ \frac{1}{n}\sum_{i=1}^nL_{\beta_n}(g_{q^*}, \bX_i^{\prime}) \right] \leq \frac{C(\beta_n + \mcB)^2 (\log 2\mcN_{2n} + 1) }{n}.
\end{eqnarray}

Combining inequalities~\eqref{eq:lc}, \eqref{eq:lt}, and~\eqref{eq:i1eq3-o}, we have 
\begin{eqnarray*}
I_{11} &=& \mbE_{D_n}\left(n^{-1}\sum_{i=1}^n L(g_{\hat{\balpha}, \hat{\btheta}}, \bX_i^{\prime}) \right)\\
&\leq& \mbE_{D_n}\left(n^{-1}\sum_{i=1}^n L(g_{q^*}, \bX_i^{\prime}) \right) + 6 \left( \mbE|Y_i| + \mcB\right)\delta \\
&\leq& \mbE_{D_n}\left(n^{-1}\sum_{i=1}^n L_{\beta_n}(g_{q^*}, \bX_i^{\prime}) \right) + 6 \left( \mbE|Y_i| + \mcB\right)\delta + 
12\mbE(|Y_i|I(Y_i > \beta_n))\mcB \\
&\leq & \frac{C(\beta_n + \mcB)^2 (\log 2\mcN_{2n} + 1) }{n} + 6 \left( \mbE|Y_i| + \mcB\right)\delta + 
12\mbE(|Y_i|I(Y_i > \beta_n))\mcB.
\end{eqnarray*}

Let $\beta_n = \log n$ and $\delta = n^{-1}$. Under conditions~(C2) and (C3), using the above inequalities, we obtain that 
\begin{eqnarray}
\label{eq:i11}
I_{11} \preceq \frac{C\log^2 n \log \mcN_{2n}}{n}.
\end{eqnarray}

Then, combining inequalities~\eqref{eq:t1}, \eqref{eq:t2}, and~\eqref{eq:i11}, using the condition that $\mbE\bigg[d\bigg((\balpha, \btheta), \bTheta\bigg) \bigg]
\leq c \mbE\bigg(
\mcR(g_{\balpha, \btheta}) - \mcR(g_{\check{\balpha}, \check{\btheta}})\bigg)$, we can obtain 
\begin{equation}
\begin{aligned}
\label{eq:if1}
&\mbE\!\left[d\!\left((\hat\balpha,\hat\btheta),\bTheta\right)\right]
= \mbE\!\left(\big\|(\hat\balpha,\hat\btheta)-(\check\balpha,\check\btheta)\big\|_2^2\right)
   \;\le\; c\,\mbE\!\left\{\mcR\!\big(g_{\hat\balpha,\hat\btheta}\big)-\mcR\!\big(g_{\check\balpha,\check\btheta}\big)\right\}\\
&\le c\, I_1
   + c\,\mbE\!\left[
       \lambda_1\big(\|\check\balpha\|_1-\|\hat\balpha\|_1\big)
       + \lambda_2 \sum_{l=1}^{\mcD} \big(\|\check{\boldsymbol{W}}_l-\boldsymbol{I}\|_1-\|\hat{\boldsymbol{W}}_l-\boldsymbol{I}\|_1\big)
     \right] \\
&\le c\, I_{11}
   + c\,\lambda_1\,\mbE\!\left[\|\check\balpha-\hat\balpha\|_1\right]
   + c\,\lambda_2 \sum_{l=1}^{\mcD} \mbE\!\left[\|\check{\boldsymbol{W}}_l-\hat{\boldsymbol{W}}_l\|_1\right]\\
&\le c\,\frac{C\,\log^2 n \,\log \mcN_{2n}(n^{-1},\|\cdot\|_\infty,\mcG|_{\bx})}{n}
   + c\,\lambda_1 \sqrt{d}\,\mbE\!\left[\|\check\balpha-\hat\balpha\|_2\right] \\
&\qquad
   + c\,\lambda_2 \sum_{l=1}^{\mcD} \mcW \,\mbE\!\left[\|\check{\boldsymbol{W}}_l-\hat{\boldsymbol{W}}_l\|_F\right], 
\end{aligned}
\end{equation}

To let the right side of the above inequality to be minimum, we have that 
\begin{equation}
\label{eq:lam2}
\begin{aligned}
 &   \lambda_1^2  \le \frac{c\log^2 n \,\log \mcN_{2n}}{nd}
\;+\;\frac{c\big(\mcR(g_{\check\balpha,\check\btheta})-\mcR(g^*)\big)}{d},\\
&\lambda_2^2 \le \frac{c\log^2 n \,\log \mcN_{2n}}{n\mcW^2\mcD}
\;+\;\frac{c\big(\mcR(g_{\check\balpha,\check\btheta})-\mcR(g^*)\big)}{\mcW^2\mcD},
\end{aligned}
\end{equation}

Note that using Young's inequality, we have 

\begin{equation}
\label{eq:ip}
    \begin{aligned}
&\mbE\!\left[c\lambda_1\sqrt{d}\,\|\check\balpha-\hat\balpha\|_2\right]
\;\le\;
\frac{1}{2}\,\mbE\!\left[\|\check\balpha-\hat\balpha\|_2^2\right]
\;+\;\frac{c^2\lambda_1^2 d}{2}. \\
&\mbE\!\left[c\lambda_2\,\mcW\,\|\check{\boldsymbol{W}}_l-\hat{\boldsymbol{W}}_l\|_F\right]
\;\le\;
\frac{1}{2}\,\mbE\!\left[\|\check{\boldsymbol{W}}_l-\hat{\boldsymbol{W}}_l\|_F^2\right]
\;+\;\frac{c^2\lambda_2^2 \mcW^2}{2}.
\end{aligned}
\end{equation}

Combining inequalities~\eqref{eq:if1}, \eqref{eq:lam2} and~\eqref{eq:ip}, we can obtain that 
\[
\begin{aligned}
&\mbE\!\left[d\!\left((\hat\balpha,\hat\btheta),\bTheta\right)\right]\\
\;\le\;&
c\left\{
\frac{\log^2 n \,\log \mcN_{2n}}{n}
\;+\;\big(\mcR(g_{\check\balpha,\check\btheta})-\mcR(g^*)\big)
\;+\;\lambda_1^2 d
\;+\;\lambda_2^2 \mcW^2 \mcD
\right\}.\\
\;\le\;
&c\left\{
\frac{\log^2 n\,\log \mcN_{2n}}{n}
\;+\;\big(\mcR(g_{\check\balpha,\check\btheta})-\mcR(g^*)\big)
\right\}.
\end{aligned}
\]
\color{black}

Let $(\balpha^*_s, \btheta^*_s) \in \arg\min_{\balpha_s, \btheta_s}\mcR(g_{\balpha_s, \btheta_s}) = \arg\min_{\balpha_s, \btheta_s}\mbE\left( Y - g_{\btheta_s}(\balpha_s \odot \bX_s)\right)^2$. Define $\bTheta_s = \{(\balpha_s, \btheta_s): \mcR(g_{\balpha_s, \btheta_s}) = \mcR(g_{\balpha_s^*, \btheta^*_s})\}$.
Define $d((\balpha_s, \btheta_s), \bTheta_s) = \min_{(\balpha_s, \btheta_s) \in \bTheta_s} \| (\hat{\balpha}_s, \hat{\btheta}_s) - (\balpha_s, \btheta_s) \|_2^2$, and 
write $(\check{\balpha}_s, \check{\btheta}_s) \in \arg\min_{(\balpha_s, \btheta_s) \in \bTheta_s} \| (\hat{\balpha}_s, \hat{\btheta}_s) - (\balpha_s, \btheta_s) \|_2^2$, where 
$\check\btheta_s = \{(\check{W}_{s, l}, \check{c}_{s, l}), l = 0, \cdots, L\}$. Denote $\tilde{\btheta} \in \arg\min_{\btheta} \mcR(g_{\tilde{\balpha}, \btheta})$, where $\tilde{\balpha} = (\check{\balpha}_s, \bm{0})$, 
Then, it follows from Lemma~\ref{th:le1} that $(\check{\balpha}_s, \bm{0}, \tilde{\btheta}) \in \bTheta$.

It follows from
\[
\mcR_n(g_{\hat{\balpha}, \hat{\btheta}}) + \lambda_1\|\hat{\balpha}\|_1 +\lambda_{2}\sum_{l=1}^{\mcD}\lVert \hat{\boldsymbol{W}}_l-\boldsymbol{I}\rVert_{1}  \leq \mcR_n(g_{\check{\balpha}_s, \bm{0}, \check{\btheta}}) + \lambda_1\| \check{\balpha}_s \|_1 + \lambda_{2}\sum_{l=1}^{\mcD}\lVert \check{\boldsymbol{W}}_l-\boldsymbol{I}\rVert_{1} , 
\]
that 
\begin{align}
\lambda_1 \|\hat{\balpha}_c\|_1
&\le \mcR_n(g_{\check{\balpha}_s, \bm{0}, \check{\btheta}}) - \mcR_n(g_{\hat{\balpha}, \hat{\btheta}})
+ \lambda_1 \big( \| \check{\balpha}_s \|_1 - \|\hat{\balpha}_s \|_1 \big)
+ \lambda_{2}\!\sum_{l=1}^{\mcD}\!\Big(\big\lVert \check{\boldsymbol{W}}_l-\boldsymbol{I}\big\rVert_{1}
 - \big\lVert \hat{\boldsymbol{W}}_l-\boldsymbol{I}\big\rVert_{1}\Big).
\label{eq:step-start}
\end{align}

Insert and subtract population risks gives:
\begin{align}
\lambda_1 \|\hat{\balpha}_c\|_1
&\le \Big[\mcR(g_{\check{\balpha}_s, \bm{0}, \check{\btheta}}) - \mcR(g_{\hat{\balpha}, \hat{\btheta}})\Big]
+ \Big[\mcR(g_{\hat{\balpha}, \hat{\btheta}}) - \mcR_n(g_{\hat{\balpha}, \hat{\btheta}})\Big] \nonumber\\
&\quad + \lambda_1 \big( \| \check{\balpha}_s \|_1 - \|\hat{\balpha}_s \|_1 \big)
+ \lambda_{2}\!\sum_{l=1}^{\mcD}\!\Big(\big\lVert \check{\boldsymbol{W}}_l-\boldsymbol{I}\big\rVert_{1}
 - \big\lVert \hat{\boldsymbol{W}}_l-\boldsymbol{I}\big\rVert_{1}\Big).
\label{eq:insert-pop}
\end{align}
Because $(\check{\balpha}_s,\bm{0},\check{\btheta})\in\Theta$ we have $\mcR(g_{\check{\balpha}_s, \bm{0}, \check{\btheta}})=\mcR(g^*_{\mcG})\le \mcR(g_{\hat{\balpha}, \hat{\btheta}})$, hence the first bracket in \eqref{eq:insert-pop} is nonpositive and can be dropped:
\begin{align}
\lambda_1 \|\hat{\balpha}_c\|_1 
\le &\Big[\mcR(g_{\hat{\balpha}, \hat{\btheta}}) - \mcR_n(g_{\hat{\balpha}, \hat{\btheta}})\Big]
+ \lambda_1 \big( \| \check{\balpha}_s \|_1 - \|\hat{\balpha}_s \|_1 \big) \notag\\
&+ \lambda_{2}\!\sum_{l=1}^{\mcD}\!\Big(\big\lVert \check{\boldsymbol{W}}_l-\boldsymbol{I}\big\rVert_{1}
 - \big\lVert \hat{\boldsymbol{W}}_l-\boldsymbol{I}\big\rVert_{1}\Big).
\label{eq:after-drop}
\end{align}

For the gate term, $\|\check\balpha_s\|_1-\|\hat\balpha_s\|_1 \le \|\check\balpha_s-\hat\balpha_s\|_1 \le \sqrt{d}\,\|\check\balpha_s-\hat\balpha_s\|_2$.
For the depth term, by triangle inequality and $\ell_1$–$\ell_2$,
\[
\sum_{l=1}^{\mcD}\!\Big|\big\lVert \check{\boldsymbol{W}}_l-\boldsymbol{I}\big\rVert_{1}
 - \big\lVert \hat{\boldsymbol{W}}_l-\boldsymbol{I}\big\rVert_{1}\Big|
\;\le\;
\sum_{l=1}^{\mcD}\!\big\lVert \check{\boldsymbol{W}}_l-\hat{\boldsymbol{W}}_l\big\rVert_{1}
\;\le\; \sqrt{S}\,\|\check{\btheta}-\hat{\btheta}\|_2,
\]
where $S = \mcW^2 \mcD $ is the total number of scalar parameters in $\theta$.
Therefore \eqref{eq:after-drop} yields
\begin{align}
\lambda_1 \|\hat{\balpha}_c\|_1
&\le \Big[\mcR(g_{\hat{\balpha}, \hat{\btheta}}) - \mcR_n(g_{\hat{\balpha}, \hat{\btheta}})\Big]
+ \lambda_1 \sqrt{d}\,\|\check\balpha_s-\hat\balpha_s\|_2
+ \lambda_2 \sqrt{S}\,\|\check{\btheta}-\hat{\btheta}\|_2.
\label{eq:l1-bound}
\end{align}

With Young's inequality, we have:
\begin{align}
\lambda_1 \|\hat{\balpha}_c\|_1
&\le \Big[\mcR(g_{\hat{\balpha}, \hat{\btheta}}) - \mcR_n(g_{\hat{\balpha}, \hat{\btheta}})\Big]
+ \tfrac12\|\check\balpha_s-\hat\balpha_s\|_2^2 + \tfrac12\lambda_1^2 d
+ \tfrac12\|\check{\btheta}-\hat{\btheta}\|_2^2 + \tfrac12\lambda_2^2 S.
\label{eq:young}
\end{align}

Taking expectations:
\begin{align}
\mbE\!\left[\lambda_1 \|\hat{\balpha}_c\|_1\right]
&\le \mbE\!\left[\mcR(g_{\hat{\balpha}, \hat{\btheta}}) - \mcR_n(g_{\hat{\balpha}, \hat{\btheta}})\right]
+ \tfrac12\,\mbE\!\left[\|(\hat\balpha_s,\hat\btheta)-(\check\balpha_s,\check\btheta)\|_2^2\right]
+ \tfrac12\lambda_1^2 d + \tfrac12\lambda_2^2 S.
\label{eq:final-exp}
\end{align}

which leads to 
\[
\mbE(\|\hat{\balpha}_c\|^2_2) \leq  \mbE(\|\hat{\balpha}_c\|^2_1) \leq \frac{c\log^2 n \log \mcN_{2n} }{n} + c \left( \mcR(g_{\check{\balpha}_s, \bm{0}, \check{\btheta}}) - \mcR(g^*) \right).
\]

\subsection*{Lemmas}
\begin{lemma}
\label{th:le1}
Under condition (C1), if $p = O(log^c n)$ for some positive constant $c$, for 
$(\balpha^*, \btheta^*) \in \arg\min_{g \in \mcG(\balpha, \btheta)} \mbE(Y_i - g(\bX_i))^2$, then it follows that 
\begin{itemize}
\item[{(i)}] $|\alpha^*_j| \geq \tau_d$, for $\forall j = 1, \cdots, d$ and some positive constant $c$; 
\item[{(ii)}] there exists a solution $(\balpha^*, \btheta^*)$, such that $\balpha^*_c = 0$.  
\end{itemize}
\end{lemma}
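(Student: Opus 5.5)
We prove Lemma~\ref{th:le1} by comparing the population risk of $g^*_{\mcG}$ with that of networks whose gate on one coordinate is removed or rescaled, using (C1) to show that dropping an important coordinate costs a fixed amount of risk, while an unimportant coordinate can be zeroed at no cost. Throughout, take $(\balpha^*,\btheta^*)$ to be a population minimizer over $\mcG(\balpha,\btheta)$. We assume this minimizer exists; this is plausible since $\|g\|_\infty\le\mcB$ and the parameter set can be taken compact after rescaling, but it is an assumption rather than something the lemma's hypotheses guarantee. We also assume $g^*_{\mcG}$ is uniformly close to $g^*$: the approximation error $\|g^*_{\mcG}-g^*\|_\infty$ (e.g.\ from \citet{jiao2023deep}) must be small relative to $\tau_d$. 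The condition $p=O(\log^c n)$ is what keeps the network large enough for this approximation to hold, with $d_0=d$ fixed, while the covering number stays polylogarithmic.

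\emph{Part (ii).} Write the input as $\balpha\odot\bX=(\balpha_s\odot\bX^s,\balpha_c\odot\bX^c)$. Since $g^*(\bX)=\mbE(Y\mid\bX^s)$, the risk decomposes as
\[
\mcR(g)=\mbE\{Y-g^*(\bX^s)\}^2+\mbE\{g^*(\bX^s)-g(\bX)\}^2 .
\]
Fix any minimizer $\phi^*(\balpha^*\odot\cdot)$. Because $\bX^c$ is conditionally uninformative, the conditional average $\bar\phi(\bx^s)=\mbE[\phi^*(\balpha^*\odot\bX)\mid\bX^s=\bx^s]$ has no larger risk than the minimizer, by Jensen's inequality. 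The difficulty is that $\bar\phi$ need not be a ReLU network. I would therefore take the more robust route and argue directly inside the class. Set $\balpha_c=0$ and freeze the first-layer columns acting on $\bX^c$ at a constant, absorbing that constant into the bias $c_0$. This produces a network in $\mcG$ that depends only on $\bX^s$. To show its risk is no larger, I would use that the class restricted to functions of $\bX^s$ attains the same approximation error to $g^*$. This step is the main obstacle: exact equality of risks essentially needs $\bX^c$ independent of $(\bX^s,Y)$, or an argument that the infimum over $\mcG$ restricted to $\bX^s$ equals the full infimum. I would try to establish the latter by showing that any dependence on $\bX^c$ only adds the nonnegative variance term $\mbE\{\phi-\bar\phi\}^2$ and then projecting back into $\mcG$ with an approximation slack. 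If exact equality fails, I would settle for the statement up to that slack.

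\emph{Part (i).} Suppose instead that $|\alpha^*_j|<\tau_d$ for some $j\le d$. Compare $g_{\balpha^*,\btheta^*}$ with the network obtained by setting $\alpha_j=0$. ReLU networks with bounded weights are Lipschitz, and $X_{[j]}\in[0,1]$, so
\[
\|g_{\balpha^*,\btheta^*}-g_{(\balpha^*_{-j},0),\btheta^*}\|_\infty\le L\tau_d .
\]
A network with $\alpha_j=0$ is a function of $\bX_{[-j]}$, so its excess risk is at least $\mbE\{g^*-\mbE(Y\mid\bX_{[-j]})\}^2\ge c^2\tau_d^2$ by (C1). Combining the two gives $\mcR(g^*_{\mcG})-\mcR(g^*)\ge (c\tau_d-L\tau_d)^2$ up to constants. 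This contradicts the smallness of the approximation error once constants are arranged, possibly after rescaling the first layer so that $L$ is small relative to $c$, which is legitimate because $\balpha$ and $\bm{W}_0$ trade off scale. This yields the claimed lower bound $|\alpha^*_j|\ge\tau_d$, up to the constant adjustment noted.
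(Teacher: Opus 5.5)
\emph{Part (i).} Your argument has the same skeleton as the paper's: zero the gate on coordinate $j$, bound the change by Lipschitz continuity, and contradict (C1). You invoke (C1) differently, though, and soundly. The paper asserts pointwise that $g_{\btheta^*}(\balpha^*\odot\bX)=\mbE(Y\mid\bX=\bx)$ and $g_{\btheta^*}(\balpha^*\odot\bX_{[j]}(0))=\mbE(Y\mid\bX_{[-j]}=\bx_{[-j]})$. The first identity needs $g^*_{\mcG}=g^*$ exactly. The second has no justification even then, because zeroing an input of the regression function does not produce a conditional expectation. You instead use that every function $h$ of $\bX_{[-j]}$ satisfies $\mcR(h)-\mcR(g^*)=\mbE(h-g^*)^2\ge\mbE\{g^*-\mbE(Y\mid\bX_{[-j]})\}^2\ge c^2\tau_d^2$, and you control $g^*_{\mcG}-g^*$ by an explicit approximation bound. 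The extra hypotheses you flag (existence, a weight bound, approximation error small relative to $\tau_d$) are genuinely needed, and the paper uses them silently.

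Drop the rescaling remark. Multiplying $\alpha_j$ by $t$ and the $j$-th column of $\bm{W}_0$ by $t^{-1}$ leaves $g$ unchanged, so rescaling yields a different minimizer and says nothing about the given one. In fact this symmetry shows that (i) is false for arbitrary minimizers unless weights are bounded, so your weight bound is essential; it also gives existence by compactness. Rescaling is unnecessary anyway. The $L^2$ triangle inequality gives $c\tau_d\le\lVert h-g^*\rVert_{L^2}\le L|\alpha^*_j|+\lVert g^*_{\mcG}-g^*\rVert_{L^2}$. Hence $|\alpha^*_j|\ge c\tau_d/(2L)$ once the approximation error is at most $c\tau_d/2$, which is the ``some positive constant'' form the lemma intends.

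\emph{Part (ii).} Here there is a genuine gap: you end with an approximate statement, not the lemma. The paper uses your construction: set $\balpha_c=\bm{0}$, copy $\check\btheta_s$ on the $\bX^s$ block, and leave the first-layer columns on $\bX^c$ arbitrary. It then simply asserts $\mcR(g_{\check\balpha_s,\check\btheta_s})=\mcR(g_{\balpha^*,\btheta^*})$ ``based on \eqref{eq:imp}''. That is exactly the step you isolated, and it does not follow from conditional unimportance, which constrains $\mbE(Y\mid\bX)$ but not which features a fixed-size network can exploit.

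Suppose $\bX^c$ is a noisy proxy for a nonlinear feature of $\bX^s$ that the architecture cannot build itself. For example, take one hidden unit (hence monotone in each input), a non-monotone $g^*$ of a scalar $\bX^s$ taking three values, and $\bX^c$ a noisy indicator of the peak of $g^*$. Then $\bX^c$ is conditionally unimportant and (C1) holds, yet no minimizer has $\balpha_c=\bm{0}$. So no projection argument can close this; an extra assumption is needed.

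The natural one is that $\bX^c$ is independent of $(\bX^s,Y)$. Then $\mcR(g)=\mbE\,r(\bX^c)$ with $r(\bx^c)=\mbE\{Y-g(\bX^s,\bx^c)\}^2$, so some $\bx^c_0$ has $r(\bx^c_0)\le\mcR(g)$. Freeze $\bX^c$ at $\bx^c_0$ by absorbing $\bm{W}_0^c(\balpha_c\odot\bx^c_0)$ into $c_0$, provided any weight bound leaves room for the shifted bias. This gives a network with $\balpha_c=\bm{0}$ and no larger risk. It is your ``freeze at a constant'' construction, with the constant chosen by averaging rather than arbitrarily.

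The approximate version does not suffice downstream. The proof of Theorem~\ref{th:cons} needs exact membership $(\check\balpha_s,\bm{0},\check\btheta)\in\bTheta$ to discard $\mcR(g_{\check\balpha_s,\bm{0},\check\btheta})-\mcR(g_{\hat\balpha,\hat\btheta})$ as nonpositive.
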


\noindent \textbf{Proof of Lemma~\ref{th:le1}}. 

(i) Suppose that there exists a $(\balpha^*, \btheta^*)$ such that 
$|\alpha^*_j| < c\tau_d$ for at least one $j \in \{1, \cdots, d\}$. Then, for any random vector $\bX \in [0, 1]^p$, we construct 
the vector $\bX_{[j]}(0) = (X_{[1]}, \cdots, X_{[j-1]}, 0, X_{[j+1]}, \cdots, X_{[p]})$, clearly, 
$|g_{\btheta^*}\big(\balpha^* \odot \bX_{[j]}(0)\big) - g_{\btheta^*}\big( \balpha^* \odot \bX)| \leq c\tau_d$ for some positive constant $c$. Based on the definition that 
$g^*_{\mcG} = \arg\min_{g \in \mcG(\balpha, \btheta)}\mbE(Y_i - g_{\btheta}(\balpha \odot \bX_i))^2$, that 
\[
g_{\btheta^*}\big(\balpha^* \odot \bX_{[j]}(0)\big) = \mbE(Y \mid \bX_{[-j]} = \bx_{[-j]}), \; \; \text{and} \; \; 
g_{\btheta^*}(\balpha^* \odot \bX) = \mbE(Y \mid \bX = \bx).
\]
It contradicts condition (C2). Thus, for any $j = 1, \cdots, d$, $|\alpha^*_j| \geq \tau_d$ for some positive constant $c$. 

(ii) Let $(\balpha^*_s, \btheta^*_s) \in \arg\min_{\balpha_s, \btheta_s}\mcR(g_{\balpha_s, \btheta_s}) = \arg\min_{\balpha_s, \btheta_s}\mbE\left( Y - g_{\btheta_s}(\balpha_s \odot \bX_s)\right)^2$. Define $\bTheta_s = \{(\balpha_s, \btheta_s): \mcR(g_{\balpha_s, \btheta_s}) = \mcR(g_{\balpha_s^*, \btheta^*_s})\}$.
Define $d((\balpha_s, \btheta_s), \bTheta_s) = \min_{(\balpha_s, \btheta_s) \in \bTheta_s} \| (\hat{\balpha}_s, \hat{\btheta}_s) - (\balpha_s, \btheta_s) \|_2^2$, and 
write $(\check{\balpha}_s, \check{\btheta}_s) \in \arg\min_{(\balpha_s, \btheta_s) \in \bTheta_s} \| (\hat{\balpha}_s, \hat{\btheta}_s) - (\balpha_s, \btheta_s) \|_2^2$, where 
$\check\btheta_s = \{(\check{A}_{s, l}, \check{c}_{s, l}), l = 0, \cdots, L\}$. Denote $\tilde{\btheta} \in \arg\min_{\btheta} \mcR(g_{\tilde{\balpha}, \btheta})$, where $\tilde{\balpha} = (\check{\balpha}_s, \bm{0})$. Let $\check{\btheta} = \{(\tilde{A}_l, \tilde{c}_l), l = 0, \cdots, L\}$ with $\tilde{c}_ l = \check{c}_{s, l}$ for $l = 0, \cdots, L$, $\tilde{A}_l = \check{A}_{s, l}$ for $l = 1, \cdots, L$, and 
$\tilde{A}_0 = (\check{A}_{s, 0}, A_{c, 0})$. Then, based on the definition~\eqref{eq:imp}, it is easy to show that 
\begin{eqnarray*}
\mcR(g_{\tilde{\balpha}, \tilde{\btheta}}) \leq \mcR(g_{\tilde{\balpha}, \check{\btheta}}) = \mcR(g_{\check\balpha_s, \check\btheta_s}) = \mcR(g_{\balpha^*, \btheta^*}).
\end{eqnarray*}
Clearly, $(\tilde{\balpha}, \tilde{\btheta})$ is the solution.

\begin{lemma}(Approximation error, \cite[Theorem 3.3 in ][]{jiao2023deep})
\label{lem1}

Given $H\ddot o lder$ smooth functions $g^* \in \mathcal{H}_\beta([0,1]^d,B_0)$, for any $D \in \mathbb{N}^+$ and $W \in \mathbb{N}^+$,  there exists a function $g^*_{\mcG}$ implemented by a ReLU feedforward neural network with width $\mathcal{W} = 38 (\left \lfloor \beta \right \rfloor +1)^{2}d^{\left \lfloor \beta \right \rfloor+1}W\left \lceil \log_2(8W) \right \rceil$ and depth $\mathcal{D} =21(\left \lfloor \beta \right \rfloor +1)^{2}D\left \lceil \log_2(8D) \right \rceil$ such that
\begin{equation*}
	\begin{split}
	\left |g^* - g_{\mcG}^*\right | \leq 18 B_0(\left \lfloor \beta \right \rfloor +1)^{2}d^{\left \lfloor \beta \right \rfloor+\max\{ \beta, 1 \}/2}(WD)^{-2\beta/d},
	\end{split}
\end{equation*}
for all $x \in \left [ 0,1 \right ]^{d} \setminus \Omega(\left [ 0,1 \right ]^{d},K,\delta) $ where
\begin{equation*}
	\begin{split}
	\Omega(\left [ 0,1 \right ]^{d},K,\delta)=\bigcup_{i=1}^{d}\{ x=\left [ x_1,\cdots,x_d \right ]^{T}: x_i \in \bigcup_{k=1}^{K-1}(k/ K-\delta,k / K) \},
	\end{split}
\end{equation*}
with $K=\left \lfloor WD \right \rfloor$ and $\delta$ is an arbitrary number in $(0,1 / (3K)]$.
\end{lemma}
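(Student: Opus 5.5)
The statement to prove is the last lemma displayed: the approximation result quoted from Theorem 3.3 of \citet{jiao2023deep}. Since it is cited verbatim, the proof is mainly a matter of reproducing that construction. I would follow the standard route, with one extra check at the end: that the constructed network lies in the class $\mcG$ used here.

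\textbf{Step 1: grid and local Taylor polynomials.} Partition $[0,1]^d$ into $K^d$ cubes $Q_\beta$ of side $1/K$, with $K=\lfloor WD\rfloor$. On each cube, approximate $g^*$ by its Taylor polynomial of degree $s=\lfloor\beta\rfloor$ centred at the corner $\beta/K$. The Hölder condition gives a pointwise error of order $B_0 d^{s+\max\{\beta,1\}/2}K^{-\beta}$. Squaring $K^{-\beta}$ gives the rate $(WD)^{-2\beta/d}$, which is where the factor $d$ in the exponent comes from: one takes $K^d\asymp (WD)^2$, i.e.\ $K\asymp (WD)^{2/d}$, rather than $K=WD$ literally.

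\textbf{Step 2: realise the pieces with ReLU networks.} Three subnetworks are needed.
\begin{itemize}
\item A step function $\psi$ mapping $x_i$ to $\lfloor Kx_i\rfloor/K$, so that $x\mapsto\beta(x)$ identifies the cube. A ReLU network can be piecewise constant only up to linear ramps of width $\delta$. This is exactly why the trifling region $\Omega([0,1]^d,K,\delta)$ is excluded.
\item A point-fitting network that memorises the $K^d\binom{d+s}{s}$ Taylor coefficients $\partial^\alpha g^*(\beta/K)$ from the index of $\beta$. The bit-extraction construction of Shen, Yang and Zhang does this with width $O(W\log W)$ and depth $O(D\log D)$, to accuracy $K^{-\beta}$.
\item A ReLU approximation of the products $\partial^\alpha g^*\cdot(x-\beta/K)^\alpha$, built from Yarotsky's squaring network through $xy=((x+y)^2-x^2-y^2)/2$. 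Its cost contributes the $(s+1)^2$ and $d^{s+1}$ factors in the width and depth.
\end{itemize}
Composing these and summing the approximation, coefficient-fitting and product errors yields the constant $18B_0(s+1)^2 d^{s+\max\{\beta,1\}/2}$.

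\textbf{Step 3: fit into the class and combine errors.} Track the width and depth, which give the stated $38(\cdot)$ and $21(\cdot)$ bounds. Take $\balpha=\mathbf 1$ on the active coordinates, so the gated network represents the constructed function. Finally, clip the output to $[-\mcB,\mcB]$ with two ReLUs, so that $\|g\|_\infty\le\mcB$ as (C3) requires.

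\textbf{Main obstacle.} The delicate step is the bit-extraction/point-fitting network. It is what delivers the $(WD)^{-2\beta/d}$ rate instead of the naive $(WD)^{-\beta/d}$. I would import it as a black box from \citet{jiao2023deep} instead of re-deriving it, and verify only the bookkeeping of the constants.
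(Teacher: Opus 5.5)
The paper gives no proof of this lemma. It imports Theorem~3.3 of \citet{jiao2023deep} verbatim, so your plan (local Taylor polynomials on a grid, a ReLU step function, bit-extraction point fitting, product subnetworks) follows the same route, just unpacked. There is, however, one concrete point you leave unresolved, and it is exactly where the displayed statement departs from its source: the grid size $K$.

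Step~1 opens with $K=\lfloor WD\rfloor$ and two sentences later switches to $K\asymp (WD)^{2/d}$; both cannot hold. The second is what the construction needs. The point-fitting network of width $O(W\log W)$ and depth $O(D\log D)$ fits on the order of $(WD)^2$ values. With $K=\lfloor WD\rfloor$ you would get only the rate $(WD)^{-\beta}$ when $d=1$, and you would need $(WD)^d$ coefficients, too many, when $d\ge 3$. So $K^d\asymp (WD)^2$, and the Taylor error $K^{-\beta}$ is then $(WD)^{-2\beta/d}$ directly; nothing is squared.

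As your own second bullet says, the exceptional set is determined by where $\psi$ has its ramps, i.e.\ by this $K$. Your argument therefore gives the bound for $x\notin\Omega([0,1]^d,K',\delta)$ with $K'\asymp (WD)^{2/d}$ and $\delta\le 1/(3K')$, which is how \citet{jiao2023deep} state it. It does not give the bound for $x\notin\Omega([0,1]^d,\lfloor WD\rfloor,\delta)$. Except when $d=2$, the displayed version does not follow:
\begin{itemize}
\item for $d=1$, the network has roughly $(WD)^2$ ramp strips while the statement excludes only $WD-1$;
\item for $d\ge 3$, the grids $\{k/K'\}$ and $\{k/\lfloor WD\rfloor\}$ are in general misaligned.
\end{itemize}
In either case, points outside the displayed $\Omega$ can lie on a ramp of $\psi$, where the error is not controlled.

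You therefore need to say explicitly that $K=\lfloor WD\rfloor$ in the statement is a transcription slip, and prove the lemma with $K$ of order $(WD)^{2/d}$ as in the source. If you want a bound off the displayed set (indeed off every set, which is what the sup-norm claim after Corollary~\ref{cor:rate} actually uses), there is a standard device. Lu, Shen, Yang and Zhang take the middle value of $3^d$ shifted copies, which removes the trifling region. That multiplies the width by $3^d$ and adds $O(d)$ layers, so it does not deliver the constants $38$ and $21$.

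Two minor points on Step~3:
\begin{itemize}
\item The gating with $\balpha=\mathbf 1$ plays no role, since the lemma concerns functions on $[0,1]^d$ only.
\item Clipping to $[-\mcB,\mcB]$ is harmless for the pointwise error, being a $1$-Lipschitz projection onto an interval containing $g^*(x)$. It does add a layer, which the stated depth $21(\lfloor\beta\rfloor+1)^2D\lceil\log_2(8D)\rceil$ does not budget for.
\end{itemize}
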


\begin{lemma}(Bounding the covering number, \cite[Theorem 12.2 in ][]{anthony1999neural} and \cite[Theorems 3 and 7 in][]{bartlett2019nearly})
\label{lem2}

Let ReLU feedforward neural network $\mcG$ be a set of real functions from a domain $\mcX$ to the bounded interval $[0, \mcB]$. There exists a universal constant $C$ such that the following holds. Given any $\mathcal{D}, \mathcal{S}$ with $\mathcal{S} > C\mathcal{D} > C^2$, there exists network class $\mcG$ with $\leq \mathcal{D}$ layers and $\leq \mathcal{S}$ parameters with $VC$-dimension $\geq \mathcal{S}\mathcal{D} \log (\mathcal{S}/\mathcal{D})/C$ and given $\delta > 0$
\begin{equation*}
	\begin{split}
	\sup_{\bx}\log \mcN_{2n}(\delta, \|\cdot\|_{\infty}, \mcG|_{\bx}) = O\big( \mathcal{S}\mathcal{D} \log (\mathcal{S}/\delta) \big).
	\end{split}
\end{equation*}
\end{lemma}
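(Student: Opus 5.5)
The statement to prove is the last one displayed: Lemma~\ref{lem2}, the covering-number bound for ReLU networks. My plan is to reduce it to VC/pseudo-dimension bounds and then use the standard conversion from pseudo-dimension to uniform $\|\cdot\|_\infty$ covering numbers on finite samples. There are two parts.

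\textbf{Existence part.} The lower bound on the VC dimension, $\mathrm{VCdim}\geq \mathcal{S}\mathcal{D}\log(\mathcal{S}/\mathcal{D})/C$, is Theorem 3 of \citet{bartlett2019nearly}. Their construction uses a bit-extraction network with at most $\mathcal{D}$ layers and $\mathcal{S}$ parameters, and it requires $\mathcal{S}>C\mathcal{D}>C^2$. I would quote that construction. The only check needed is that composing with a fixed affine rescaling and a clipping map into $[0,\mcB]$ keeps the network within the stated layer and parameter budget. Clipping is implementable with two ReLUs, so this changes the counts only by constants.

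\textbf{Upper bound on the pseudo-dimension.} For any class with at most $\mathcal{D}$ layers and $\mathcal{S}$ parameters, Theorem 7 of \citet{bartlett2019nearly} gives
\[
\mathrm{Pdim}(\mcG)\preceq \mathcal{S}\mathcal{D}\log \mathcal{S}.
\]
The argument partitions parameter space by the sign patterns of the piecewise-polynomial activations layer by layer. One then counts regions with Warren's bound and sums over layers. This count applies equally to the thresholded class $\{(x,t)\mapsto \mathrm{sgn}(g(x)-t)\}$, so it controls the pseudo-dimension and not only the VC dimension.

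\textbf{From pseudo-dimension to covering number.} For a class bounded in $[0,\mcB]$, Theorem 12.2 of \citet{anthony1999neural} gives, for $n\geq \mathrm{Pdim}$,
\[
\mcN_{2n}(\delta,\|\cdot\|_\infty,\mcG|_{\bx})\leq \Big(\frac{2e\,n\mcB}{\delta\,\mathrm{Pdim}}\Big)^{\mathrm{Pdim}}.
\]
This bound is uniform in the sample $\bx$, so it also bounds the supremum over $\bx$. Taking logarithms yields
\[
\sup_{\bx}\log\mcN_{2n}\preceq \mathcal{S}\mathcal{D}\log\mathcal{S}\,\log(n\mcB/\delta).
\]

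\textbf{Main obstacle: matching the stated rate.} The stated bound is $O(\mathcal{S}\mathcal{D}\log(\mathcal{S}/\delta))$, while the previous display carries an extra $\log n$ and splits the logarithm differently. To absorb this, I would take $\delta=n^{-1}$, which is the choice used in the proof of Theorem~\ref{th:cons}, and assume $n\preceq \mathrm{poly}(\mathcal{S})$, as holds in the regime of Corollary~\ref{cor:rate}. Under these conditions the factor $\log\mathcal{S}\cdot\log(n/\delta)$ collapses to $\log(\mathcal{S}/\delta)$ up to constants and logarithmic factors. If that absorption fails, I would state the bound with the extra logarithmic factor; this does not affect Corollary~\ref{cor:rate}, whose rates already carry $\log^2 n$.
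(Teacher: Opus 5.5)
Your first three steps are essentially all the paper offers: it gives no argument beyond citing Theorem 12.2 of Anthony and Bartlett and Theorems 3 and 7 of Bartlett et al. Those steps are correct, and together they give $\sup_{\bx}\log\mcN_{2n}(\delta,\|\cdot\|_\infty,\mcG|_{\bx})\le \mathrm{Pdim}(\mcG)\log\{2en\mcB/(\delta\,\mathrm{Pdim}(\mcG))\}$ for $2n\ge\mathrm{Pdim}(\mcG)$, with $\mathrm{Pdim}(\mcG)\preceq\mathcal{S}\mathcal{D}\log\mathcal{S}$. That is $O(\mathcal{S}\mathcal{D}\log\mathcal{S}\cdot\log(n\mcB/\delta))$.

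The gap is your final absorption step, which fails because a product of logarithms does not collapse to the logarithm of a product. Take $\delta=n^{-1}$, $a=\log\mathcal{S}$ and $b=\log n$. The ratio $\log\mathcal{S}\cdot\log(n/\delta)\,/\,\log(\mathcal{S}/\delta)$ equals $2ab/(a+b)\ge\min\{a,b\}$. So in exactly the regime $n\preceq\mathrm{poly}(\mathcal{S})$ you invoke, where $a\asymp b$, an unbounded factor of order $\log n$ survives. ``Up to logarithmic factors'' is therefore the entire discrepancy, not a harmless constant.

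No sharper argument can deliver the displayed bound as stated. For fixed $\delta$ it is free of $n$, and that is false when the weights are unconstrained, as they are for the paper's class $\mcG$.
\begin{itemize}
\item Take $2n$ points with distinct first coordinates $t_1<\cdots<t_{2n}$.
\item The width-two network $\bx\mapsto\mcB\{\sigma(a(x_1-b))-\sigma(a(x_1-b)-1)\}$, padded with identity layers, lies in the class with outputs in $[0,\mcB]$.
\item Choosing $b\in(t_k,t_{k+1})$ and $a$ large realizes each of the $2n+1$ threshold patterns with values in $\{0,\mcB\}$.
\item These patterns are pairwise at sup-distance $\mcB$, so $\sup_{\bx}\log\mcN_{2n}(\delta)\ge\log(2n+1)$ for every $\delta<\mcB/2$, while $\mathcal{S}\mathcal{D}\log(\mathcal{S}/\delta)$ stays fixed as $n\to\infty$.
\end{itemize}

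So your fallback is not optional; the statement has to change in one of two ways.
\begin{itemize}
\item Keep the pseudo-dimension form above, with its dependence on $n$ and the product of logarithms. This is what the cited theorems actually give.
\item Alternatively, add an explicit constant bound on the weights and replace the pseudo-dimension route by a grid covering of parameter space, using the Lipschitz dependence of the network on its weights. This gives $O(\mathcal{S}\{\mathcal{D}\log\mathcal{S}+\log(1/\delta)\})$ uniformly over the domain, which lies within the displayed $O(\mathcal{S}\mathcal{D}\log(\mathcal{S}/\delta))$.
\end{itemize}

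Your remark that the extra factor leaves Corollary~\ref{cor:rate} untouched is also wrong. The $\log^2 n$ there comes from the truncation level $\beta_n=\log n$ in the bound on $I_{11}$, and it multiplies $\log\mcN_{2n}$ rather than absorbing anything. With the correct covering bound at $\delta=n^{-1}$, the first term becomes $(\log^3 n)\,n^{-1}\mathcal{S}\mathcal{D}\log\mathcal{S}$ rather than $(\log^2 n)\,n^{-1}\mathcal{S}\mathcal{D}\log\mathcal{S}$.
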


\end{document}